\newcommand{\SYS}{\text{SYS}}
\newcommand{\USER}{\text{USER}}
\newcommand{\NET}{\text{NET}}
\newcommand{\SYSR}{\text{SYS\_REL}}
\newcommand{\SYSA}{\text{SYS\_AVG}}
\newcommand{\USERA}{\text{USER\_AVG}}
\newcommand{\SYSRES}{\text{SYS\_FIX}}
\title{Optimal Resource Allocation over Networks 
%with Cumulative Prospect Theory Players 
via Lottery-Based Mechanisms}
\author{Soham R.\ Phade and Venkat Anantharam% <-this % stops a space
\thanks{Research supported by the NSF Science and Technology Center grant CCF-
0939370: "Science of Information", the NSF grants ECCS-1343398, CNS-1527846
and CIF-1618145, and the William and Flora Hewlett Foundation supported Center for Long Term Cybersecurity at Berkeley.}% <-this % stops a space
\thanks{The authors are with the Department of Electrical Engineering and Computer Science, University of California, Berkeley, Berkeley, CA 94720.
        {\tt\small soham\_phade@berkeley.edu, ananth@eecs.berkeley.edu}}%
}
\date{}
\begin{document}
\maketitle
%\tableofcontents
%\listoffigures
%\listoftables

\begin{abstract}

 We show that, in a resource allocation problem,
 the ex ante aggregate utility of players with cumulative-prospect-theoretic preferences can be increased over deterministic allocations by implementing lotteries. 
 We formulate an optimization problem, called the system problem, to find the optimal lottery allocation. 
 The system problem exhibits a two-layer structure comprised of a permutation profile and optimal allocations given the permutation profile.
 For any fixed permutation profile, we provide a market-based mechanism to find the optimal allocations and prove the existence of equilibrium prices. 
 We show that the system problem has a duality gap, in general, and that the primal problem is NP-hard. 
 We then consider a relaxation of the system problem and derive some qualitative features of the optimal lottery structure.

%\keywords{Resource Allocation  \and Networks \and Lottery \and Cumulative Prospect Theory.}
\end{abstract}

%!TEX root = ../main.tex

\section{Introduction}

We consider the problem of congestion management in a network, 
and resource allocation amongst heterogeneous users, in particular human agents, with varying preferences. 
This is a well-recognized problem in network economics, 
with applications to transportation and
%\citep{patriksson2015traffic}, 
telecommunication networks, 
%\citep{altman2006survey}, 
%the internet \citep{srikant2012mathematics}, 
energy smart grids, 
information and financial networks, 
labor markets and social networks, to name a few \cite{nagurney2013network}. 
Market-based solutions have proven to be very useful for this purpose, 
with varied mechanisms, such as auctions 
%\blue{(cite MacKie-Mason and Varian if relevant)} 
and fixed rate pricing \cite{falkner2000overview}.
%\blue{(cite Odlysko if relevant)}. 
In this paper, we consider a lottery-based mechanism, as opposed to the deterministic allocations studied in the literature. 
%in order to allocate resources.
% in contrast to all the above mechanisms. 
We mainly ask the following questions:
\begin{inparaenum}[(i)]
	\item \textit{Do lotteries provide an advantage over deterministic implementations?}
	\item \textit{If yes, then does there exist a market-based mechanism to implement an optimum lottery?}
\end{inparaenum}

In order to answer the first question we need to define our goal in allocating resources. 
There is an extensive literature on the advantages of lotteries: Eckhoff \cite{eckhoff1989lotteries} and Stone \cite{stone2007lotteries} hold that lotteries are used because of fairness concerns;
%The role of lotteries in allocating resources is well established since historical times
%\blue{(examples and citations - biblical example, public housing, school admissions, hunting permits, oil drilling leases, etc)};
Boyce \cite{boyce1994allocation} argues that lotteries are effective to reduce rent-seeking from speculators;
Morgan \cite{morgan2000financing} shows that lotteries are an effective way of financing public goods through voluntary funds, when the entity raising funds lacks tax power;
Hylland and Zeckhauser \cite{hylland1979efficient} propose implementing lotteries to elicit honest preferences and allocate jobs efficiently.
In all of these works, there is an underlying assumption, which is also one of the key reasons for the use of lotteries, that the goods to be allocated are indivisible.

However, we notice lotteries being implemented even when the goods to be allocated are divisible, for example in lottos and parimutuel betting. 
In several experiments \cite{prabhakar2013designing}, it has been  observed that lottery-based rewards are more appealing than deterministic rewards of the same expected value, 
and thus provide an advantage in maximizing the desired influence on people's behavior.
%In their experiments they observed that lottery rewards work better than fixed rewards. 
%Also personalized offers give rise to improvements.
We also observe several firms presenting lottery-based offers to incentivize customers into buying their products or using their services, and in return to improve their revenues.
%An alternate goal could be to maximize the revenue of a service providing firm (find examples and refs). 
%Already many firms offer discount lotteries etc. 
Thus, although lottery-based mechanisms are being widely implemented, a theoretical understanding for the same seems to be lacking. 
This is one of the motivations for this paper, which aims to justify the use of lottery-based mechanisms, 
based on models coming from behavioral economics for how humans evaluate options.
%So rather than proposing lotteries as a solution, the purpose of this paper is to initiate a theoretical understanding of approaches based on lotteries.

%Lotteries are generally compared with price based resource allocations like waiting lines, auctions, etc which tend to implement efficient resource allocations. 
%For example, lottery allocations are often inefficient because the goods may not be ultimately assigned to the users who value them the most.
%There is a huge literature on each of these implementations and also a lot of comparative analysis between them. 
%Our model captures many of these under a single framework and hopefully it should help us analyze these implementations in a principled way. 
%For example it should predict what kind of implementation is better suited under what conditions.

%Roughly speaking, here the lure of winning makes it the preferred implementation. 
%Our model seems to capture these effects. 
%Lotteries can thus be used beyond settling disputes over indivisible goods between players having equally strong claims. 
We take a utilitarian approach of maximizing the ex ante aggregate utility or the net happiness of the players. 
%However, instead of looking at the social welfare ex post, we maximize the it ex ante, before the lottery outcomes have been realized. 
(See~\cite{altman2004equilibrium} and the references therein for the relation with other goals such as maximizing revenue.) 
We model each player's utility using cumulative prospect theory (CPT), 
a framework pioneered by Tversky and Kahneman \cite{tversky1992advances}, 
which is believed, based on extensive experimentation with human subjects \cite{camerer1998prospect}, to form a better theory 
with which to model human behavior when faced with prospects
 than is expected utility theory (EUT) \cite{von1945theory}. 
It is important to emphasize that CPT includes EUT as a special case and therefore 
provides a strict generalization of existing modeling techniques.

%When the prospects are restricted to reward allocations with no loses, as is the case with the model considered in this paper, CPT utility is identical to
%rank dependent utility (RDU) \citep{quiggin1982theory}. 
%This is a generalization of EUT \citep{von1945theory} with 
CPT posits a \emph{probability weighting function} that, along with the ordering of the allocation outcomes in a lottery,
dictates the \emph{probabilistic sensitivity} of a player (details in Section~\ref{sec: model}),
%has the additional feature of \emph{probabilistic sensitivity}, 
a property that plays an important role in lotteries and gambling. 
As Boyce \cite{boyce1994allocation} points out, ``it is the lure of getting the good without having to pay for it that gives allocation by lottery its appeal.'' 
The probability weighting function typically over-weights small probabilities and under-weights large probabilities, and this captures the ``lure'' effect.  
CPT also posits a reference point that divides the prospect outcomes into gains and losses domains in order to model the loss aversion of the players. 
In order to focus on the effects of probabilistic sensitivity, 
and to avoid the complications resulting from reference point considerations, 
we assume that 
the reference point of all the players is equal to $0$, and we consider prospects with only nonnegative outcomes. 
This is, in fact, identical to the rank dependent utility (RDU) model \cite{quiggin1982theory}.
%RDU is a special case of cumulative prospect theory (CPT) \citep{tversky1992advances}, which further takes \emph{loss aversion} into account.

%We consider the problem of maximizing the total social welfare or the net ``happiness'' of all the users, which is closely related to other goals such as those mentioned above. 
%A common approach in social choice theories. little justification.

We will mainly be concerned with the framework considered in \cite{kelly1997charging}, 
that of throughput control in the internet with elastic traffic. 
However, this framework is general enough to have applications to network resource allocation problems arising in several other domains.
Kelly suggested that the throughput allocation problem be posed as one of achieving maximum aggregate utility for the users. 
A market is proposed, in which each user submits an amount she is willing to pay per unit time to the network 
based on tentative rates that she received from the network; 
the network accepts these submitted amounts and determines the price of each network link. 
A user is then allocated a throughput in proportion to her submitted amount and inversely proportional to the sum of the prices of the links she wishes to use. 
Under certain assumptions, Kelly shows that there exist equilibrium prices and throughput allocations, 
and that these allocations achieve maximum aggregate utility. 
Thus the overall \emph{system problem} of maximizing aggregate utility is decomposed into a \emph{network problem} and several \emph{user problems}, 
one for each individual user.
Further, in \cite{kelly1998rate}, the authors have proposed two classes of algorithms which can be used to implement a relaxation of the above optimization problem.

Instead of allocating a single throughput, we consider allocating a \emph{prospect} of throughputs to each user. 
Such a prospect consists of a finite set of throughputs and a probability assigned to each of these throughputs, 
with the interpretation that one of these throughputs would be realized with its corresponding probability (see Section \ref{sec: model} for the definition).
We then ask the question of finding the optimum allocation profile of prospects,
one for each user, comprised of throughputs and 
associated probabilities for that user, that maximizes the aggregate utility of all the players, and is also \emph{feasible}. 
An allocation profile of prospects for each user is said to be feasible if it can be implemented, i.e. there exists a probability distribution over feasible throughput allocations whose marginals for each player agree with their allocated prospects.

%Constrained resources is an important feature in any resource allocation problem. 
%Networks provide a fairly general way to model these constraints. 
%These days, many lotteries have network aspects inbuilt in their structure.\footnote{
%For example, Genera Networks runs lotteries based on geographical localization and individual personalization.}

%The most commonly studied example of preference behavior is given by \citet{von1945theory} expected utility theory (EUT). Under EUT, each player has a utility function $u_i: \bbR \to \bbR$ and the preference function is given by
%\[
%	U_i^{\text{EUT}}(q_i) := \bbE_{q_1} [u_i(x_i)]. 
%\]
%Further, the utility functions are typically assumed to be concave in order to account for players' risk averseness.
If all the players have EUT utility with concave utility function, as is typically assumed to model risk-averseness,  
one can show that there exists a feasible deterministic allocation that achieves the optimum and hence there is no need to consider lotteries. 
%By deterministic allocation we mean that that probability distribution $q$ is a delta measure on some feasible allocation.
%This results from the convexity of the feasible set $\pcal{F}$, linearity of the preference functions in probabilities and concavity of the utility functions. 
However, if the players' utility is modeled by CPT, then one can improve over the best aggregate utility obtained through deterministic allocations.

For example, Quiggin \cite{quiggin1991optimal} considers the problem of distributing a fixed amount amongst several homogeneous players with RDU preferences. 
He concludes that, under certain conditions on players' RDU preferences, the optimum allocation system is a lottery scheme with a few large prizes and a large number of small prizes,
and is strictly preferred over distributing the total amount deterministically amongst the players. 
In Section~\ref{sec: relaxtion}, we extend these results to network settings with heterogeneous players.

In Section \ref{sec: model}, we describe the network model, the lottery structure, and the CPT model of player utility.
We formulate an optimization problem, called the system problem,
to find the optimum lottery scheme.
The solution of such an optimization problem, 
as explained in Section \ref{sec: model}, 
exhibits a layered structure of finding a \emph{permutation profile}, 
and corresponding feasible throughput allocations.
The permutation profile dictates in which order throughput allocations for each player are coupled together for network feasibility purposes. 
A player's CPT value for her lottery allocation depends only on the order amongst her own throughputs,
and not on the coupling with the other players.

Given a permutation profile, the problem of finding optimum feasible throughput allocations is a convex programming problem, which we call the fixed-permutation system problem, and leads to a nice price mechanism. 
In Section \ref{sec: equilibrium}, we prove the existence of equilibrium prices that decompose the fixed-permutation system problem into a network problem and several user problems, one for each player, as in \cite{kelly1997charging}. 
The prices can be interpreted as the cost imposed on the players and can be implemented in several forms, such as waiting times in waiting-line auctions or first-come-first-served allocations \cite{barzel1974theory,taylor2003lottery}, 
delay or packet loss in the Internet TCP protocol \cite{mo2000fair,la2002utility}, 
efforts or resources invested by players in a contest \cite{moldovanu2001optimal,che2003optimal}, 
or simply money or reward points.

Finding the optimum permutation profile, on the other hand, is a non-convex problem. 
In Section~\ref{sec: permutation_duality_gap}, we study the duality gap in the system problem and consider a relaxation of the system problem 
by allowing the permutations to be doubly stochastic matrices 
instead of restricting them to be permutation matrices. 
We show that strong duality holds in the relaxed system problem and so it has value equal to the dual of the original system problem (Theorem~\ref{thm: duality_ineq}). 
We also consider the problem where link constraints hold in expectation, called the average system problem, and show that strong duality holds in this case and so it has value equal to the relaxed problem.
In Section~\ref{sec: relaxtion}, we study the average system problem in further detail, and prove a result on the structure of optimal lotteries. 
Example~\ref{ex: duality_gap}, establishes that the duality gap in the original system problem can be nonzero and Theorem~\ref{thm: nphardness} shows that the primal system problem is NP-hard.
In Section \ref{sec: conclusion}, we conclude with some open problems for future research.

%!TEX root = ../main.tex

\section{Model}
\label{sec: model}

Consider a network with a set $[m] = \{1,\dots,m\}$ of \emph{resources} or \emph{links} 
and a set $[n] = \{1,\dots,n\}$ of \emph{users} or \emph{players}. 
Let $c_j > 0$ denote the finite \emph{capacity} of link $j \in [m]$ and 
let $c := (c_j)_{j \in [m]} \in \bbR^m$. 
(All vectors, unless otherwise specified, will be treated as column vectors.)
Each user $i$ has a fixed \emph{route} $J_i$, 
which is a non-empty subset of $[m]$. 
Let $A$ be an $n \times m$ matrix, 
where $A_{ij} = 1$ if link $j \in J_i$, and $A_{ij} = 0$ otherwise. 
Let $x := (x_i)_{i \in [n]} \in \bbR^n_+$ denote an \emph{allocation profile} 
where user $i$ is allocated the throughput $x_i \geq 0$ that flows through the links in the route $J_i$. 
We say that an allocation profile $x$ is feasible if it satisfies the capacity constraints of the network, i.e., $A^T x \leq c$,
where the inequality is coordinatewise.
%\begin{equation}
%\label{eq: network_feasible}
%	A^T x \leq c.
%\end{equation}
Let $\pcal F$ denote the set of all feasible allocation profiles.
We assume that the network constraints are such that $\pcal {F}$ is bounded, and hence a polytope.

Instead of allocating a fixed throughput $x_i$ to player $i \in [n]$, we consider allocating her a \emph{lottery} (or a \emph{prospect})
\begin{equation}
\label{eq: lottery_struct}
	L_i := \{(p_i(1),y_i(1)),\dots,(p_i(k_i),y_i(k_i))\},
\end{equation}
where $y_i(l_i) \geq 0, l_i \in [k_i]$, denotes a throughput and $p_i(l_i), l_i \in [k_i]$, is the probability with which throughput $y_i(l_i)$ is allocated. 
We assume the lottery to be exhaustive, i.e. $\sum_{l_i \in [k_i]} p_i(l_i) = 1$. 
(Note that we are allowed to have $p_i(l_i) = 0$ for some values of $l_i \in [k_i]$ and $y_i(l_i^1) = y_i(l_i^2)$ for some $l_i^1,l_i^2 \in [k_i]$.)
Let $L = (L_i, i \in [n])$ denote a \emph{lottery profile}, where each player $i$ is allocated lottery $L_i$.

We now describe the CPT model we use to measure the ``utility'' or ``happiness'' derived by each player from her lottery (for more details see~\cite{wakker2010prospect}). 
%It is a generalization of the expected utility theory model (EUT) \citep{von1945theory} of choice under risk, 
%and a special case of cumulative prospect theory (CPT) \citep{tversky1992advances}.
Each player $i$ is associated with a value function
%\footnote{
%The value function in RDU is generally defined on $\bbR$. However, since we will be dealing only with nonnegative throughputs, we restrict it to $\bbR_+$.
%}
$v_i : \bbR_+ \to \bbR_+$ 
that is continuous, differentiable, concave, and strictly increasing, 
and a probability weighting function $w_i : [0,1] \to [0,1]$ that is continuous, 
%\footnote{
%The continuity assumption for the probability weighting function can be relaxed. Discontinuities at $0$ and $1$ are empirically interesting 
%(for more on the phenomenon of ignorance at end points see 
%\citep{kahneman1979prospect}. 
%}
strictly increasing and satisfies $w_i(0) = 0$ and $w_i(1) = 1$.

For the prospect $L_i$ in \eqref{eq: lottery_struct}, let $\pi_i : [k_i] \to [k_i]$ be a permutation such that
\begin{equation}
	\label{eq: pi_order_1}
	z_i(1) \geq z_i(2) \geq  \dots \geq z_i(k_i),
\end{equation}
and
\begin{equation}
	\label{eq: pi_order_2}
	y_i(l_i) = z_i(\pi_i(l_i)) \mbox{ for all } l_i \in [k_i].
\end{equation}
The prospect $L_i$ can equivalently be written as
\[
	L_i = \{(\tilde p_i(1), z_i(1)); \dots ; (\tilde p_i(k_i), z_i(k_i))\},
\]
where $\tilde p_i(l_i) := p_i(\pi_i^{-1}(l_i))$ for all $l_i \in [k_i]$.
The \emph{CPT value} of prospect $L_i$ for player $i$ is evaluated using the value function $v_i(\cdot)$ and the probability weighting function $w_i(\cdot)$ as follows:
\begin{equation}
\label{eq: RDU_formula}
	V_i(L_i) := \sum_{l_i = 1}^{k_i} d_{l_i}(p_i,\pi_i)v_i(z_i(l_i)),
\end{equation}
where $d_{l_i}(p_i,\pi_i)$ are the \emph{decision weights} given by $d_1(p_i,\pi_i) := w_i(\tilde p_i(1))$ and
\begin{align*}
	d_{l_i}(p_i,\pi_i) := w_i(\tilde p_i(1) + \dots + \tilde p_i(l_i)) - w_i(\tilde p_i(1) + \dots + \tilde p_i(l_i - 1)),
\end{align*}
for $1 < l_i \leq k_i$.
Although the expression on the right in equation \eqref{eq: RDU_formula} depends on the permutation $\pi_i$, one can check that the formula evaluates to the same value $V_i(L_i)$ as long as $\pi_i$ satisfies \eqref{eq: pi_order_1} and \eqref{eq: pi_order_2}. 
The CPT value of prospect $L_i$, can equivalently be written as
\[
	V_i(L_i) = \sum_{l_i = 1}^{k_i} w_i\big(\sum_{s_i = 1}^{l_i} \tilde p_i(s_i)\big ) \l[ v_i(z_i(l_i)) - v_i(z_i(l_i + 1))) \r],
\]
where $z_i(k_i+1) := 0$.
Thus the lowest allocation $z_i(k_i)$ is weighted by $w_i(1) = 1$, and every increment in the value of the allocations, $v_i(z_i(l_i)) - v_i(z_i(l_i + 1)), \forall l_i \in [k_i -1]$, is weighted by the probability weighting function of the probability of receiving an allocation at least equal to $z_i(l_i)$.

For any finite set $S$, let $\Delta(S)$ denote the standard simplex of all probability distributions on the set $S$, i.e.,
\[
	\Delta(S) := \{(p(s), s \in S)| p(s) \geq 0 \; \forall s \in S, \sum_{s \in S} p(s) = 1\}.
\]
Thus $p_i := (p_i(l_i))_{l_i \in [k_i]} \in \Delta([k_i])$.
We say that a lottery profile $L$ is \emph{feasible} 
if there exists a joint distribution $p \in \Delta(\prod_i [k_i])$ 
such that the following conditions are satisfied:
\begin{enumerate}[(i)]
	\item The marginal distributions agree with $L_i$ for all players $i$, i.e. $\sum_{l_{-i}} p(l_i,l_{-i}) = p_i(l_i)$ for all $l_i \in [k_i]$, where $l_{-i}$ in the summation ranges over values in $\prod_{i' \neq i} [k_{i'}]$.
	\item For each $(l_i)_{i \in [n]} \in \prod_i [k_i]$ in the support of the distribution $p$ 
	(i.e. $p((l_i)_{i \in [n]}) > 0$), the allocation profile $(y_i(l_i))_{i \in [n]}$ is feasible.
\end{enumerate}

The distribution $p$ and the throughputs $\l(y_i(l_i), i \in [n], l_i \in [k_i]\r)$ of a feasible lottery profile together define a \emph{lottery scheme}. 
In the following, we restrict our attention to specific types of lottery schemes, wherein the network implements with equal probability 
%(i.e. w.p. $1/k$) 
one of the $k$ allocation profiles
$y(l) := (y_i(l))_{i \in [n]} \in \bbR^n_+, \text{ for } l \in [k]$.
Let $[k] = \{1, \dots, k\}$ denote the set of \emph{outcomes}, where allocation profile $y_i(l)$
 is implemented if outcome $l$ occurs. 
Clearly, such a scheme is feasible iff each of the allocation profiles $y(l), \forall l \in [k]$ belongs to $\pcal{F}$. 
Player $i$ thus faces the prospect $L_i = \{(1/k, y_i(l))\}_{l = 1}^k$
and such a lottery scheme is completely characterized by the tuple $y := (y_i(l), i \in [n], l \in [k]).$
By taking $k$ large enough, any lottery scheme can be approximated by such a scheme.

Let $y_i := (y_i(l))_{l \in [k]} \in \bbR^k_+$.
% and $y(l) = (y_i(l))_{i \in [n]} \in \bbR^n_+$.
%Let $S(k)$ denote the set of all permutations $(\pi(l), l = [k])$ of $[k]$.
Let $z_i := (z_i(l))_{l \in [k]} \in \bbR^k_+$ be a vector and $\pi_i: [k] \to [k]$ be a permutation such that
\[
	z_i(1) \geq z_i(2) \geq  \dots \geq z_i(k),
\]
and
\[
	y_i(l) = z_i(\pi_i(l)) \mbox{ for all } l \in [k].
\]
Note that $y_i$ is completely characterized by $\pi_i$ and $z_i$.
% even though we suppress this in the notation.
Then player $i$'s CPT value will be
\[
	V_i(L_i) = \sum_{l = 1}^k h_i(l) v_i(z_i(l)),
\]
where $h_i(l) := w_i(l/k) -  w_i((l-1)/k)$ for $l \in [k]$. 
Let $h_i := (h_i(l))_{l \in [k]} \in \bbR^k_+$.
Note that $h_i(l) > 0$ for all $i,l$, since the weighting functions are assumed to be strictly increasing. 
%Also the function $v_i(\cdot)$ is assumed to be concave.

Looking at the lottery scheme $y$ in terms of individual allocation profiles $z_i$ and permutations $\pi_i$ for all players $i \in [n]$, allows us to separate those features of $y$ that affect individual preferences and those that pertain to the network implementation. 
We will later see that the problem of optimizing aggregate utility can be decomposed into two layers:
% with little cross-layer coupling:
\begin{inparaenum}[(i)]
	\item a convex problem that optimizes over resource allocations, and
	\item a non-convex problem that finds the optimal permutation profile.
\end{inparaenum}
%This is typical in resource allocation problems 
%\blue{(include examples and citations Srikant paper).}

Let $z := (z_i(l),i \in [n], l \in [k]), \pi := (\pi_i, i \in [n]), h := (h_i(l),i \in [n],l \in [k])$ and $v := (v_i(\cdot), i \in [n])$.
Let $S_{k}$ denote the set of all permutations of $[k]$.
The problem of optimizing aggregate utility 
$\sum_i V_i(L_i)$ subject to the lottery scheme being feasible,  
can be formulated as follows:
\begin{equation*}
\begin{aligned}
& \text{SYS}[z,\pi ; h,v,A,c]\\
& \text{Maximize}
& & \sum_{i=1}^n \sum_{l = 1}^k h_i(l) v_i(z_i(l))\\
& \text{subject to} && \sum_{i \in R_j} z_i(\pi_i(l)) \leq c_j, \forall j \in [m], \forall l \in [k],\\
&&& z_i(l) \geq z_i(l+1), \forall i \in [n], \forall l \in [k],\\
&&& \pi_i \in S_k, \forall i \in [n].
\end{aligned}
%\refstepcounter{opt}
%\tag*{SYSTEM$[h,v,A,c]$}
%\label{opt: dis_sys}
\end{equation*}
Here $R_j := \{i \in [n] | j \in J_i\}$ is the set of all players whose route uses link $j$. 
We set $z_i(k + 1) = 0$ for all $i$, and the $z_i(k+1)$ are not treated as variables.
This takes care of the condition $z_i(l) \geq 0$ for all $i \in [n], l \in [k]$.
%The nice feature of this folrmuation is that for a fixed permumtation profile $(\pi_i, i \in [n])$, the objective function is concave in $(z_i(t), i \in [n], 1 \leq t \leq T)$.

%!TEX root = ../main.tex

\section{Equilibrium}
\label{sec: equilibrium}
%%%%%%%%%%%%%%%%%%%%%%%%%%%%%%%%%%%%%%%%%%%%%%%
%Decomposition result
%%%%%%%%%%%%%%%%%%%%%%%%%%%%%%%%%%%%%%%%%%%%%%%
The system problem $\SYS[z,\pi;h,v,A,c]$ optimizes over $z$ and $\pi$.
In this section we fix $\pi_i \in S_k$ for all $i$ and optimize over $z$.
Let us denote this fixed-permutation system problem by $\SYSRES[z;\pi,h,v,A,c]$.
\begin{equation*}
\begin{aligned}
& \SYSRES[z; \pi,h,v,A,c]\\
& \text{Maximize}
& & \sum_{i=1}^n \sum_{l = 1}^k h_i(l) v_i(z_i(l))\\
& \text{subject to} && \sum_{i \in R_j} z_i(\pi_i(l)) \leq c_j, \forall j \in [m], \forall l \in [k],\\
&&& z_i(l) \geq z_i(l+1), \forall i \in [n], \forall l \in [k].
\end{aligned}
%\refstepcounter{opt}
%\tag*{SYSTEM$[h,v,A,c]$}
%\label{opt: dis_sys}
\end{equation*}
(In contrast with $\SYS(z,\pi; \dots)$, in $\SYSRES(z; \pi, \dots)$, the permutation $\pi$ is thought of as being fixed.)
Since $v_i(\cdot)$ is assumed to be a concave function and $h_i(l) > 0$ for all $i,l$, this problem has a concave objective function with linear constraints. 
For all $j \in [m], l \in [k]$, let $\lambda_j(l) \geq 0$ be the dual variables corresponding to the constraints $\sum_{i \in R_j} z_i(\pi_i(l)) \leq c_j$ respectively,
and for all $i \in [n], l \in [k]$,
let $\alpha_i(l) \geq 0$ be the dual variables corresponding to the constraints $z_i(l) \geq z_i(l+1)$ respectively. 
Let $\lambda := (\lambda_j(l),j \in [m], l \in [k])$ and $\alpha := (\alpha_i(t), i \in [n], l \in [k])$.
Then the Lagrangian for the fixed-permutation system problem $\SYSRES[z;\pi,h,v,A,c]$ can be written as follows:
\begin{align*}
	\pcal{L}\l( z;\alpha,\lambda\r) &:= \sum_{i=1}^n \sum_{l=1}^k h_i(l)v_i(z_i(l))\\
	& + \sum_{i=1}^n \sum_{l=1}^k \alpha_i(l)[z_i(l) - z_i(l+1)] + \sum_{j = 1}^m\sum_{l=1}^k \lambda_j(l) [c_j - \sum_{i \in R_j} z_i(\pi_i(l))]\\
	&= \sum_{i=1}^n \sum_{l=1}^k \l[h_i(l)v_i(z_i(l)) + (\alpha_i(l) - \alpha_i(l-1))z_i(l) - \l(\sum_{j \in {J}_i} \lambda_j(\pi_i^{-1}(l))\r)z_i(l)\r]\\
	& + \sum_{j=1}^m\sum_{l=1}^k \lambda_j(l) c_j,
\end{align*}
where $\alpha_i(0) = 0$ for all $i \in [n]$. 
%Note that $\rho_i(l)$ depends on the permutation $\pi_i$.
Differentiating the Lagrangian with respect to $z_i(l)$ we get,
\begin{equation*}
	\frac{\partial \pcal{L}(z;\alpha,\lambda)}{\partial z_i(l)} = h_i(l)v'_i(z_i(l)) + \alpha_i(l) - \alpha_i(l-1) -  \l(\sum_{j \in {J}_i} \lambda_j(\pi_i^{-1}(l))\r).
\end{equation*}
Let
\begin{equation}
\label{eq: rho_def}
	\rho_i(l) := \sum_{j \in {J}_i} \lambda_j(\pi_i^{-1}(l)),
\end{equation}
for all $i \in [n], l \in [k]$.
This can be interpreted as the price per unit throughput for player $i$ for her $l$-th largest allocation $z_i(l)$. 
The price of the lottery $z_i$ for player $i$ is given by
$\sum_{l=1}^k \rho_i(l) z_i(l)$, or equivalently,
\[
	\sum_{l=1}^k r_i(l)\l[z_i(l) - z_i(l+1)\r],
\]
where
\begin{equation}
\label{eq: def_rate_vec}
	r_i(l) := \sum_{s = 1}^l \rho_i(s), \text{ for all } l \in [k].
\end{equation}

%By equating to zero, the derivative of the Lagrangian with respect to $z_i(l)$, we see that the realized price for player $i$ for her $l$-th largest allocation is given by
%\[
%	\rho_i(l) - \alpha_i(l) + \alpha_i(l-1).
%\]
For $l \in [k-1]$, $\alpha_i(l)$ can be interpreted as a transfer of a nonnegative price for player $i$ from her $l$-th largest allocation to her ($l+1$)-th largest allocation. 
Since the allocation $z_i(l+1)$ cannot be greater than the allocation $z_i(l)$, there is a subsidy of $\alpha_i(l)$ in the price of $z_i(l)$ and an equal surcharge of $\alpha_i(l)$ in the price of $z_i(l+1)$. 
This subsidy and surcharge is nonzero (and hence positive) only if the constraint is binding, i.e. $z_i(l) = z_i(l+1)$.
On the other hand, $\alpha_i(k)$ is a subsidy in price given to player $i$ for her lowest allocation, since she cannot be charged anything higher than the marginal utility at her zero allocation.

Let $h_i := (h_i(l))_{l \in [k]} \in \bbR^k_+$.
Consider the following user problem for player~$i$:
\begin{equation}
\begin{aligned}
& \text{USER}[m_i ; r_i,h_i,v_i]\\
& \text{Maximize}
& &\sum_{l = 1}^k h_i(l) v_i\l(\sum_{s=l}^k \frac{m_i(s)}{r_i(s)}\r) - \sum_{l = 1}^k m_i(l)\\
& \text{subject to} && m_i(l) \geq 0, \forall l \in [k],\\
\end{aligned}
%\refstepcounter{opt}
%\tag{USER DIS}\label{opt: dis_user}
\end{equation}
where $r_i := (r_i(l),l \in [k])$ is a vector of rates such that 
\begin{equation}
	\label{eq: rate_cond}
	0 < r_i(1) \leq r_i(2) \leq \dots \leq r_i(k).
\end{equation}
We can interpret this as follows: User $i$ is charged rate $r_i(k)$ for her lowest allocation $\delta_i(k) := z_i(k)$. 
Let $m_i(k)$ denote the budget spent on the lowest allocation and hence $m_i(k) = r_i(k)\delta_i(k)$. 
For $1 \leq l < k$, she is charged rate $r_i(l)$ for the additional allocation $\delta_i(l) := z_i(l) - z_i(l+1)$, beyond $z_i(l+1)$ up to the next lowest allocation $z_i(l)$. 
%Thus the additional  allocation is charged with rate $r_i(k-1)$. 
Let $m_i(l)$ denote the budget spent on $l$-th additional allocation and hence $m_i(l) = r_i(l)\delta_i(l)$. 

Let $m := (m_i(l), i \in [n], l \in [k])$ and $\delta := (\delta_i(l), i \in [n], l \in [k])$.
Consider the following network problem:
\begin{equation*}
\begin{aligned}
& \text{NET}[\delta;m,\pi,A,c]\\
& \text{Maximize}
& & \sum_{i=1}^n \sum_{l = 1}^k m_i(l) \log(\delta_i(l))\\
& \text{subject to} && \delta_i(l) \geq 0, \forall i, \forall l,\\
&&& \sum_{i \in R_j} \sum_{s = \pi_i(l)}^{k} \delta_i(s) \leq c_j, \forall j, \forall l.
\end{aligned}
%\refstepcounter{opt}
%\tag{NET DIS}\label{opt: net_sys}
\end{equation*}

%Here the variables $\delta_i(l)$ denote the increments in player $i$'s allocations, i.e. $\delta_i(l) := z_i(l) - z_i(l+1)$ for all $i,l$ and $\delta := (\delta_i(l), i \in [n], l \in [k])$. 
This is the well known Eisenberg-Gale convex program \cite{eisenberg1959consensus}  and it can be solved efficiently. 
Kelly et al. \cite{kelly1998rate} proposed continuous time algorithms for finding equilibrium prices and allocations. For results on polynomial time algorithms for these problems see \cite{jain2010eisenberg,chakrabarty2006new}. 
%(\blue{check their claims}).
We have the following decomposition result:

\begin{theorem}
\label{thm: decompostion}
	For any fixed $\pi$, there exist equilibrium parameters $r^*, m^*, \delta^*$ and $z^*$ such that
	\begin{enumerate}[(i)]
		\item for each player $i$, $m_i^*$ solves the user problem $\text{USER}[m_i;r_i^*,h_i,v_i]$,
		\item $\delta^*$ solves the network problem $\text{NET}[\delta ; m^*, \pi ,A,c]$,
		\item $m_i^*(l) = \delta_i^*(l) r_i^*(l)$ for all $i,l$,
		\item $\delta_i^*(l) = z_i^*(l) - z_i^*(l+1)$ for all $i,l$, and
		\item $z^*$ solves the fixed-permutation system problem $\SYSRES[z;\pi,h,v,A,c]$.
	\end{enumerate}
\end{theorem}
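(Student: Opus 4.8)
The plan is to derive the entire decomposition from the Karush--Kuhn--Tucker (KKT) conditions of the fixed-permutation system problem, exhibiting a single primal--dual pair whose multipliers simultaneously certify optimality for all three problems. Since each $v_i$ is concave and differentiable and the constraints of $\SYSRES[z;\pi,h,v,A,c]$ are affine, while the feasible region is compact (each ordered allocation is confined to the bounded polytope $\pcal F$), an optimal $z^*$ exists; and because affine constraints satisfy the linearity constraint qualification, there exist multipliers $\lambda^*=(\lambda_j^*(l))\geq 0$ for the capacity constraints and $\alpha^*=(\alpha_i^*(l))\geq 0$ for the ordering constraints satisfying stationarity and complementary slackness at $z^*$. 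This immediately gives conclusion~(v), and the stationarity condition already computed in the text reads $h_i(l)v_i'(z_i^*(l)) + \alpha_i^*(l) - \alpha_i^*(l-1) = \rho_i^*(l)$, where $\rho_i^*(l) = \sum_{j \in J_i}\lambda_j^*(\pi_i^{-1}(l))$ as in \eqref{eq: rho_def}.

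Next I would define the equilibrium quantities from this pair: set $r_i^*(l) = \sum_{s=1}^l \rho_i^*(s)$ as in \eqref{eq: def_rate_vec}, $\delta_i^*(l) = z_i^*(l) - z_i^*(l+1)$, and $m_i^*(l) = r_i^*(l)\delta_i^*(l)$, so that conclusions~(iii) and~(iv) hold by construction. Before proceeding I must verify that $r_i^*$ obeys the rate condition \eqref{eq: rate_cond}. Monotonicity is clear since $r_i^*(l) - r_i^*(l-1) = \rho_i^*(l) \geq 0$. Strict positivity follows by summing the stationarity identity and telescoping the $\alpha$-differences (recall $\alpha_i^*(0)=0$), which yields $r_i^*(l) = \alpha_i^*(l) + \sum_{s=1}^l h_i(s)v_i'(z_i^*(s))$; since $h_i(s)>0$, $v_i'>0$, and $\alpha_i^*(l)\geq 0$, this is strictly positive. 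This telescoping identity is the small but essential observation that makes the user problem well posed.

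For conclusion~(i), I would change variables in $\USER[m_i;r_i^*,h_i,v_i]$ from $m_i$ to $z_i(l) := \sum_{s=l}^k m_i(s)/r_i^*(s)$; the constraints $m_i(l)\geq 0$ become exactly the ordering constraints $z_i(l)\geq z_i(l+1)$ (with $z_i(k+1)=0$), and by Abel summation the payment $\sum_l m_i(l)$ becomes $\sum_l \rho_i^*(l)z_i(l)$, so the user objective is the concave function $\sum_l [h_i(l)v_i(z_i(l)) - \rho_i^*(l)z_i(l)]$. Its KKT stationarity with multipliers $\beta_i(l)\geq 0$ for the ordering constraints is $h_i(l)v_i'(z_i(l)) - \rho_i^*(l) + \beta_i(l) - \beta_i(l-1) = 0$, which is precisely the system stationarity with $\beta_i = \alpha_i^*$; complementary slackness transfers verbatim. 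Hence $z_i^*$, equivalently $m_i^*$, satisfies the KKT conditions of the concave user problem and therefore solves it.

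For conclusion~(ii), I would write the KKT conditions for the Eisenberg--Gale program $\NET[\delta;m^*,\pi,A,c]$: the objective derivative $m_i^*(l)/\delta_i^*(l) = r_i^*(l)$ must equal the aggregated price from the capacity constraints, and constraint $(j,l')$ contributes its multiplier $\mu_j(l')$ to the price of $\delta_i(l)$ exactly when $i\in R_j$ and $\pi_i(l')\leq l$, i.e.\ when $l' \in \pi_i^{-1}(\{1,\dots,l\})$. The combinatorial point is that reindexing $s=\pi_i(t)$ turns $r_i^*(l) = \sum_{j\in J_i}\sum_{s=1}^l \lambda_j^*(\pi_i^{-1}(s))$ into $\sum_{j\in J_i}\sum_{t\in \pi_i^{-1}(\{1,\dots,l\})}\lambda_j^*(t)$, matching the network price exactly upon setting $\mu_j=\lambda_j^*$; complementary slackness again transfers because the capacity constraints of $\SYSRES$ and of $\NET$ coincide once $z_i(\pi_i(l)) = \sum_{s=\pi_i(l)}^k \delta_i(s)$ is substituted, so $\delta^*$ solves the network problem. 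The main obstacle I anticipate is not a single hard estimate but the bookkeeping of $\pi_i$ across the three problems --- keeping straight how the rank-indexed price $\rho_i^*(l)$ interacts with the outcome-indexed network constraints --- together with the mild non-smoothness of the log-objective at $\delta_i^*(l)=0$; the latter I would dispatch by noting that $m_i^*(l)=0$ whenever $\delta_i^*(l)=0$, so the degenerate terms drop out of both the objective and the stationarity conditions and complementary slackness holds trivially there.
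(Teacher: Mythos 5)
Your proposal is correct and follows essentially the same route as the paper: take the KKT certificate $(z^*,\alpha^*,\lambda^*)$ of the convex fixed-permutation system problem, define $r^*,\delta^*,m^*$ from it, and verify that the same multipliers certify optimality in the user and network problems. Your change of variables $m_i \mapsto z_i$ in the user problem and the telescoped identity $r_i^*(l)=\alpha_i^*(l)+\sum_{s=1}^{l}h_i(s)v_i'(z_i^*(s))$ are only presentational variants of the paper's direct differentiation in the $m_i$ coordinates and its observation that $\rho_i^*(1)>0$.
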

\begin{proof}
	Since $\SYSRES[z;\pi,h,v,A,c]$ is a convex optimization problem, 
we know that there exist $z^* = (z_i^*(l), i \in [n],l \in [k]), \alpha^* = (\alpha_i^*(l), i \in [n],l \in [k])$ and $\lambda^* = (\lambda_j^*(l), j \in [m],l \in [k])$ 
such that
\begin{align}
	h_i(l)v'_i(z_i^*(l)) = \rho_i^*(l) - \alpha_i^*(l) + \alpha_i^*(l-1), &\forall i, \forall l,\label{eq: sys_compl_1}\\ 
	z_i^*(l) \geq z_i^*(l+1), \quad \alpha_i^*(l) \geq 0, \quad \alpha_i^*(l) (z_i^*(l) - z_i^*(l+1)) = 0, &\forall i, \forall l,\label{eq: sys_compl_2}\\ 
	\sum_{i \in R_j} z_i^*(\pi_i(l)) \leq c_j, \quad \lambda_j^*(l) \geq 0, \quad \lambda_j^*(l)[c_j - \sum_{i \in R_j} z_i^*(\pi_i(l))] = 0, &\forall j, \forall l, \label{eq: sys_compl_3}
\end{align}
where
\begin{equation*}
	\rho_i^*(l) := \sum_{j \in {J}_i} \lambda_j^*(\pi_i^{-1}(l)),
\end{equation*}
and such that $z^*$ solves the fixed-permutation system problem $\SYSRES[z;\pi,h,v,A,c]$. Hence statement (v) holds for this choice of $z^*$.
Let $r_i^*(l) := \sum_{s = 1}^l \rho_i^*(s)$, $\delta_i^*(l) := z_i^*(l) - z_i^*(l+1)$ and $m_i^*(l) := \delta_i^*(l) r_i^*(l)$ for all $i,l$. 
	From \eqref{eq: sys_compl_1}, we have $\rho_i^*(1) > 0$, because $v_i(\cdot)$ is strictly increasing, $h_i(1) > 0$ and $\alpha_i^*(0) = 0$. 
	Thus, the rate vector $r_i^*$ satisfies \eqref{eq: rate_cond} for all $i$. 
	Also note that, by construction, the vectors $r^*, \delta^*, z^*$ and $m^*$ satisfy statements (iii) and (iv) of the theorem. 

	We now show that statement (i) holds. 
	Fix a player $i$. 
	Observe that the user problem $\USER[m_i;r_i^*,h_i,v_i]$ has a concave objective function since $h_i(l) > 0, r_i^*(l) > 0$ and $v_i(\cdot)$ is concave. 
	Differentiating the objective function of the user problem $\USER[m_i;r_i^*,h_i,v_i]$ with respect to $m_i(l)$ at $m_i = m_i^*$, we get
	\[
		\sum_{s = 1}^l \frac{h_i(s) v'_i(z_i^*(s))}{r_i^*(l)} - 1.
	\]
	From \eqref{eq: sys_compl_1}, we have
	\begin{align*}
		\sum_{s = 1}^l h_i(s)v'_i(z_i^*(l)) &= \sum_{s = 1}^l \rho_i^*(s) - \alpha_i^*(l)\\
		&= r_i^*(l) - \alpha_i^*(l).
	\end{align*}
	Thus,
	\[
		\sum_{s = 1}^l \frac{h_i(s) v'_i(z_i^*(s))}{r_i^*(l)} - 1 = -\frac{\alpha_i^*(l)}{r_i^*(l)} \leq 0,
	\]
	where equality holds iff $\alpha_i^*(l) = 0$.
	If $m_i^*(l) > 0$, then $\delta_i^*(l) > 0$ and hence, by \eqref{eq: sys_compl_2}, $\alpha_i^*(l) = 0$. 
	%These are precisely the conditions necessary and sufficient for the optimality of $m_i(l)$ in the user problem, it being a convex program. 
	%Since this holds for all $l \in [k]$, we get statement (i) of the 	theorem. 
	Since this holds for all $l \in [k]$, 
	these are precisely the conditions necessary and sufficient for the optimality of $m_i^*$ in the problem of user $i$, it being a convex problem.

	We now show that statement (ii) holds. 
	The Lagrangian corresponding to the network problem $\NET[\delta; m^*,\pi,A,c]$ can be written as follows:
	\begin{align*}
		\pcal{L}(\delta;\mu) = \sum_i \sum_{l = 1}^k m_i^*(l) \log(\delta_i(l)) + \sum_j \sum_l \mu_j(l) \l[c_j - \sum_{i \in R_j} \sum_{s = \pi_i(l)}^k \delta_i(s)\r],
	\end{align*}
	%for $\delta_i(l) \geq 0, \forall i,l$,
	where $\mu_j(l)$ is the dual variable corresponding to the link constraint and $\mu := (\mu_j(l),j\in [n],l \in [k])$. 
	%Differentiating with respect to $\delta_i(l)$, we get
	%\[
	%	\frac{\partial \pcal{L}(\delta;\mu)}{\partial \delta_i(l)} = \frac{m_i^*(l)}{\delta_i(l)} - \sum_{j \in {J}_i} \sum_{s = 1}^l \mu_j(\pi_i^{-1}(s)).
	%\]
	Let $\mu_j(l) = \lambda_j^*(l)$. 
	If $m_i^*(l) > 0$, then $\delta_i^*(l) > 0$, and differentiating the Lagrangian with respect to $\delta_i(l)$ at $\delta_i^*(l)$, we get
	\begin{align*}
		\frac{\partial \pcal{L}(\delta;\lambda^*)}{\partial \delta_i(l)} \Bigg|_{\delta_i(l) = \delta_i^*(l)} 
		&= \frac{m_i^*(l)}{\delta_i^*(l)} - \sum_{j \in {J}_i} \sum_{s = 1}^l \lambda^*_j(\pi_i^{-1}(s))\\
		&= r_i^*(l) - \sum_{s = 1}^l \rho_i^*(s) = 0.
	\end{align*}
	If $m_i^*(l) = 0$, then 
	%$\delta_i^*(l) = 0$ and 
	$\frac{\partial \pcal{L}(\delta;\lambda^*)}{\partial \delta_i(l)} \big|_{\delta_i(l) = \delta_i^*(l)} \leq 0$ since $\lambda^*_j(l) \geq 0$ for all $j,l$. 
	Further, from \eqref{eq: sys_compl_3}, we have $\lambda_j^*(l) \l[c_j - \sum_{i \in R_j} \sum_{s = \pi_i(l)}^k \delta_i^*(s)\r] = 0$ for all $j,l$. 
	Thus, $\delta^*$ solves the network problem $\NET[\delta;m^*,\pi,A,c]$.
\end{proof}

Thus the fixed-permutation system problem can be decomposed into user problems -- one for each player -- and a network problem, for any fixed permutation profile $\pi$. 
Similar to the framework in \cite{kelly1998rate}, we have an iterative process as follows: 
The network presents each user $i$ with a rate vector $r_i$. 
Each user solves the user problem $\USER[m_i;r_i,h_i,v_i]$, and submits their budget vector $m_i$, 
The network collects these budget vectors $(m_i)_{i \in [n]}$ and solves the network problem $\NET[\delta; m^*, \pi, A, c]$ to get the corresponding allocation $z$ (which can be computed from the incremental allocations $\delta$) and the dual variables $\lambda$.
The network then computes the rate vectors corresponding to each user from these dual variables as given by \eqref{eq: rho_def} and \eqref{eq: def_rate_vec} 
and presents it to the users as updated rates. 
Theorem \ref{thm: decompostion} shows that the fixed-permutation system problem of maximizing the aggregate utility is solved at the equilibrium of the above iterative process.
If the value functions $v_i(\cdot)$ are strictly concave, then one can show that the optimal lottery allocation $z^*$ for the fixed-permutation system problem is unique. 
However, the dual variables $\lambda$, and hence the rates $r_i, \forall i$, need not be unique. 
Nonetheless, if one uses the continuous-time algorithm proposed in \cite{kelly1998rate} to solve the network problem, then a similar analysis as in \cite{kelly1998rate}, based on Lyapunov stability, shows that the above iterative process converges to the equilibrium lottery allocation $z^*$.

%\blue{Explain thhat we can have a joint optimization algorithm between users and network as in Kelly.}
One of the permutation profiles, say $\pi^*$, solves the system problem. 
In the next section, we explore this in more detail.
However, it is interesting to note that, 
 for any fixed permutation profile $\pi$, any  deterministic solution is a special case of the lottery scheme $y$ with permutation profile $\pi$.
Thus, it is guaranteed that the solution of the fixed-permutation system problem for any permutation profile $\pi$ is at least as good as any deterministic allocation.
Here is a simple example, where a lottery-based allocation leads to strict improvement over deterministic allocations.

\begin{example}
	Consider a network with $n$ players and a single link with capacity $c$. 
	Let $n = 10$ and $c = 10$.
	For all players $i$, we employ the value functions and weighting functions suggested by Kahneman and Tversky \cite{tversky1992advances}, given by 
	\[
		v_i(x_i) = x_i^{\beta_i}, \beta_i \in [0,1],
	\]
	and 
	\[
		w_i(p_i) = \frac{p_i^{\gamma_i}}{(p^{\gamma_i} + (1-p)^{\gamma_i})^{1/{\gamma_i}}}, \gamma_i \in (0,1],
	\]
	respectively.
	We take $\beta_i = 0.88$ and $\gamma_i = 0.61$ for all $i \in [n]$. These parameters were reported as the best fits to the empirical data in \cite{tversky1992advances}. 
	The probability weighting function is displayed in Figure \ref{fig: prob_weight}.

	%!TEX root = ../main.tex

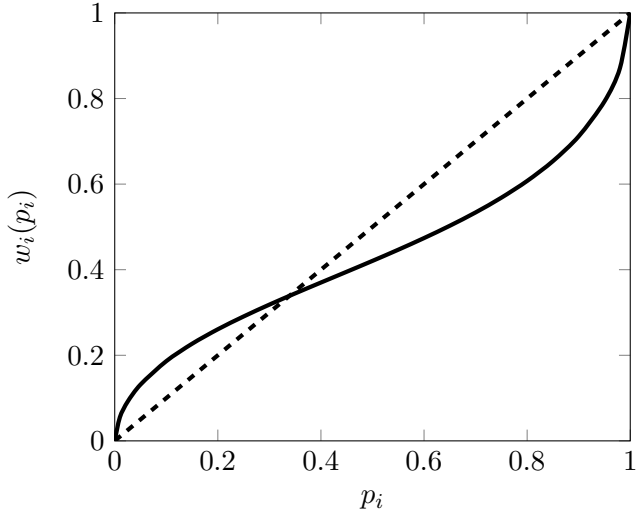
\begin{figure}
\centering

\begin{tikzpicture}[scale = 1]
    \begin{axis}[
    	xmin = 0,
    	xmax = 1,
    	ymin = 0,
    	ymax = 1,
        xlabel=$p_i$,
        ylabel=$w_i(p_i)$,
        every axis plot/.append style={ultra thick}]
    \addplot[smooth] plot coordinates {
        (0,0)
        (0.01, 0.0553)
        (0.025, 0.0915)
        (0.05, 0.1316)
        (0.10, 0.1863)
        (0.15, 0.2269)
        (0.20, 0.2608)
        (0.25, 0.2907)
        (0.30, 0.3184)
        (0.35, 0.3446)
        (0.40, 0.3700)
        (0.45, 0.3952)
        (0.50, 0.4206)
        (0.55, 0.4467)
        (0.60, 0.4739)
        (0.65, 0.5027)
        (0.70, 0.5338)
        (0.75, 0.5683)
        (0.80, 0.6074)
        (0.85, 0.6537)
        (0.90, 0.7117)
        (0.95, 0.7932)
        (0.98, 0.8710)
        (1, 1)
    };
    %\addlegendentry{Case 1}

    \addplot[smooth, dashed] plot coordinates {
        (0,0)
        (1, 1)
    };
    \end{axis}
\end{tikzpicture}
\caption{Probability weighting function}
\label{fig: prob_weight}
\end{figure}

	By symmetry and concavity of the value function $v_i(\cdot)$, the optimal deterministic allocation is given by allocating $c/n$ to each player $i$.
	The aggregate utility for this allocation is $n*v_1(c/n) = 10$.

	Now consider the following lottery allocation: Let $k = n = 10$. 
	Let $\pi_i(l) - 1 = l + i (\text{mod } k)$ for all $i \in [n]$ and $l \in [k]$.
	Let $x \in [c/n,c]$ and $z_i(1) = x$ for all $i \in [n]$ and $z_i(l) = (c-x)/(n-1)$ for all $i \in [n]$ and $l = 2,\dots,k$.
	Note that this is a feasible lottery allocation.
	Such a lottery scheme can be interpreted as follows: Select a ``winning'' player uniformly at random from all the players. 
	Allocate her a reward $x$ and equally distribute the remaining reward $c - x$ amongst the rest of the players.
	The ex ante aggregate utility is given by
	\[
		n*\l[w_1(1/n)v_1(x) + (1 - w_1(1/n))v_1((c-x)/(n-1)) \r].
	\]
	This function achieves its maximum equal to $14.1690$ at $x = 9.7871$.
	Thus, the above proposed lottery improves the aggregate utility over any deterministic allocation.
	The optimum lottery allocation is at least as good as $14.1690$.
\end{example}

%\blue{
	%Explain iterative process explicitly as discussed over mail.
%}

%!TEX root = ../main.tex

\section{Optimum permutation profile and duality gap}
\label{sec: permutation_duality_gap}

%%%%%%%%%%%%%%%%%%%%%%%%%%%%%%%%%%%%%%%%%%%%%%%
%Relaxation
%%%%%%%%%%%%%%%%%%%%%%%%%%%%%%%%%%%%%%%%%%%%%%%

The system problem $\text{SYS}[z,\pi;h,v,A,c]$ can equivalently be formulated as
\begin{equation}
\begin{aligned}
& \max_{\substack{
	\pi_i \in S_k \forall i,\\
	z : z_i(l) \geq z_i(l+1) \forall i,l
}
} 
\; \;
\min_{\substack{
	\lambda	 \geq 0
}
}
& & \sum_{i=1}^n \sum_{l=1}^k h_i(l)v_i(z_i(l))\\
&&& + \sum_{j=1}^m \sum_{l=1}^k \lambda_j(l)\l[c_j - \sum_{i \in R_j} z_i(\pi_i(l))\r].
\end{aligned}
\refstepcounter{opt}
\tag{\Roman{opt}}\label{opt: max_min}
\end{equation}
Let $W_{ps}$ denote the value of this problem. 
It is equal to the optimum value of the system problem $\text{SYS}[z,\pi;h,v,A,c]$.
%which is also the optimum feasible social welfare.
By interchanging the $\max$ and $\min$, we obtain the following dual problem:
\begin{equation}
\begin{aligned}
& \min_{\substack{
	\lambda \geq 0
}
} 
\; \;
\max_{\substack{
	\pi_i \in S_k \forall i,\\
	z : z_i(l) \geq z_i(l+1) \forall i,l
}
}
& & \sum_{i=1}^n \sum_{l=1}^k h_i(l)v_i(z_i(l))\\
&&& + \sum_{j=1}^m \sum_{l=1}^k \lambda_j(l)\l[c_j - \sum_{i \in R_j} z_i(\pi_i(l))\r].
\end{aligned}
\refstepcounter{opt}
\tag{\Roman{opt}}\label{opt: min_max}
\end{equation}
Let $W_{ds}$ denote the value of this dual problem. 
By weak duality, we know that $W_{ps} \leq W_{ds}$. 
For a fixed $\lambda \geq 0$ and a fixed $z$ that satisfies $z_i(l) \geq z_i(l+1),\forall i,l$, the optimum permutation profile $\pi$ in the dual problem \eqref{opt: min_max}
should minimize
\[
	\sum_{j=1}^m \sum_{l=1}^k \lambda_j(l) \sum_{i \in R_j} z_i(\pi_i(l)),
\]
which equals
\[
	\sum_i \sum_l \hat \rho_i(l) z_i(\pi_i(l)),
\]
Here $\hat \rho_i(l) := \sum_{j \in {J}_i} \lambda_j(l)$, is the price per unit allocation for player $i$ under outcome $l$.
Since the numbers $z_i(l)$ are ordered in descending order, any optimal permutation $\pi_i$ must satisfy
\begin{equation}
\label{eq: pi_opt_order}
	\hat \rho_i(\pi_i^{-1}(1)) \leq \hat \rho_i(\pi_i^{-1}(2)) \leq \dots \leq \hat \rho_i(\pi_i^{-1}(k)).
\end{equation} 
In other words, any optimal permutation profile $\pi$ of the dual problem \eqref{opt: min_max} must allocate throughputs in the order opposite to that of the prices $\hat \rho_i(l)$.

\begin{lemma}
\label{lem: opt_pi_strong_dual}
If strong duality holds between the problems \eqref{opt: max_min} and \eqref{opt: min_max}, 
then any optimum permutation profile $\pi^*$ satisfies \eqref{eq: pi_opt_order} for all $i$.
 \end{lemma}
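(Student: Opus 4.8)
The plan is to show that strong duality forces any primal-optimal pair $(\pi^*,z^*)$ to sit at a saddle point together with a dual-optimal multiplier $\lambda^*$, at which point $(\pi^*,z^*)$ must also solve the inner maximization of the dual problem \eqref{opt: min_max} for $\lambda=\lambda^*$. Once that is established, $\pi^*$ is an optimal permutation for the price vector $\hat\rho^*$ derived from $\lambda^*$, and the rearrangement argument recorded just before the lemma immediately yields \eqref{eq: pi_opt_order}. Throughout, write $F(\pi,z,\lambda)$ for the common objective appearing in both \eqref{opt: max_min} and \eqref{opt: min_max}.

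First I would fix any optimal permutation profile $\pi^*$ with its accompanying allocation $z^*$ solving \eqref{opt: max_min}, and let $\lambda^*$ be an optimizer of the dual \eqref{opt: min_max}. Since $(\pi^*,z^*)$ attains the outer max of the primal, $\min_{\lambda\ge 0}F(\pi^*,z^*,\lambda)=W_{ps}$, and evaluating at the feasible $\lambda^*$ gives $F(\pi^*,z^*,\lambda^*)\ge W_{ps}$. On the other hand, since $\lambda^*$ attains the outer min of the dual, $\max_{\pi,z}F(\pi,z,\lambda^*)=W_{ds}$, so $F(\pi^*,z^*,\lambda^*)\le W_{ds}$. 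Strong duality $W_{ps}=W_{ds}$ collapses this chain to equalities, whence $F(\pi^*,z^*,\lambda^*)=\max_{\pi,z}F(\pi,z,\lambda^*)$; that is, $(\pi^*,z^*)$ solves the inner maximization of \eqref{opt: min_max} at $\lambda=\lambda^*$.

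Next I would isolate the dependence on $\pi$. Holding $z=z^*$ and $\lambda=\lambda^*$ fixed, the only $\pi$-dependent part of $F$ is $-\sum_{i}\sum_{l}\hat\rho_i^*(l)\,z_i^*(\pi_i(l))$ with $\hat\rho_i^*(l)=\sum_{j\in J_i}\lambda_j^*(l)$, so maximizing $F$ over $\pi$ is the same as minimizing $\sum_i\sum_l \hat\rho_i^*(l)z_i^*(\pi_i(l))$. Since the $z_i^*(l)$ are in descending order, this is exactly the minimization analyzed in the paragraph preceding the lemma, whose optimal permutations are characterized by \eqref{eq: pi_opt_order}. Applying that observation coordinate-wise in $i$ gives $\pi^*$ satisfying \eqref{eq: pi_opt_order}, as claimed.

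The main obstacle is not the algebra but guaranteeing that the saddle point actually exists, i.e. that a dual optimizer $\lambda^*$ is attained: strong duality as an equality of values must be upgraded to attainment on at least one side. Attainment of the primal optimizer $(\pi^*,z^*)$ is harmless, since for each of the finitely many permutations the inner min over $\lambda$ restricts $z$ to the compact feasibility polytope; the delicate point is the existence of $\lambda^*$, which I would take from the hypotheses (via the strong-duality/attainment statement underlying Theorem~\ref{thm: duality_ineq}) rather than re-derive here. A secondary subtlety is ties among the $z_i^*(l)$: equal allocations may be permuted freely without affecting optimality, so \eqref{eq: pi_opt_order} should be read as the ordering that any price-minimizing permutation can be taken to satisfy, exactly as in the pre-lemma discussion.
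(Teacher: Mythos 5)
Your core argument is the paper's: pair a primal optimizer $(\pi^*,z^*)$ with a dual optimizer $\lambda^*$, run the chain $W_{ps}=\min_{\lambda\ge 0}F(\pi^*,z^*,\lambda)\le F(\pi^*,z^*,\lambda^*)\le\max_{\pi,z}F(\pi,z,\lambda^*)=W_{ds}$, collapse it under $W_{ps}=W_{ds}$ to conclude that $(\pi^*,z^*)$ attains the inner maximum of \eqref{opt: min_max} at $\lambda^*$, and finish with the rearrangement observation to obtain \eqref{eq: pi_opt_order}. That part is sound, as is your remark that primal attainment is harmless (finitely many permutations, compact feasible set of $z$ for each) and your caveat about ties among the $z_i^*(l)$.

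The genuine gap is the one you flag and then decline to close: the existence of a dual optimizer $\lambda^*$. The lemma's hypothesis is only the equality of the optimal \emph{values} $W_{ps}=W_{ds}$; it says nothing about the outer infimum in \eqref{opt: min_max} being attained. Theorem~\ref{thm: duality_ineq} does not supply attainment either --- it is a statement about values, and in the paper it appears after this lemma --- so ``take it from the hypotheses'' leaves you with no $\lambda^*$ at which to evaluate $F$, and the chain of inequalities cannot start. The paper devotes most of its proof to exactly this point: writing $\Theta_d(\lambda):=\max_{\pi,z}\Theta(\pi,z,\lambda)$, it notes that $\Theta_d$ is lower semicontinuous as a pointwise supremum of functions continuous in $\lambda$; that $\Theta_d(\lambda)=:M<\infty$ for some finite $\lambda$ (possible because the $v_i$ are concave and increasing); and that $\Theta_d(\lambda)\ge\lambda_j(l)\,c_j$ for each $j,l$ (evaluate the inner objective at $z=0$), so the infimum may be restricted to the compact box $\lambda_j(l)\in[0,\,M/\min_j c_j]$, on which a lower semicontinuous function attains its minimum. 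With that $\lambda^*$ in hand your argument goes through; you should also record, as the paper does, that $(\pi^*,z^*)$ is feasible --- otherwise $\min_{\lambda\ge 0}F(\pi^*,z^*,\lambda)=-\infty$ rather than $W_{ps}$ --- which is what legitimizes the first step of your chain.
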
 
We prove this lemma in Appendix \ref{sec: appendix}.
In general, there is a non-zero duality gap between the problems \eqref{opt: max_min} and \eqref{opt: min_max} (see Example \ref{ex: duality_gap} for such an example where the optimum permutation profile $\pi^*$ does not satisfy \eqref{eq: pi_opt_order}).
%However, in general, the optimum permutation profile $\pi^*$ of the primal problem \eqref{opt: max_min} need not satisfy \eqref{eq: pi_opt_order}) (see Example \ref{ex: duality_gap}).

The permutation $\pi_i$ can be represented by a $k \times k$ permutation matrix $M_i$, where $M_i(s,t) = 1$ if $\pi_i(s) = t$ and $M_i(s,t) = 0$ otherwise, for $s,t \in [k]$. 
%Also let $\bm {z_i} := [z_i(1), \dots , z_i(k)]^T$. 
The network constraints $\sum_{i \in R_j} z_i(\pi_i(l)) \leq c_j, \forall l \in [k],$ can equivalently be written as $\sum_{i \in R_j} M_i z_i \leq c_j \1$, 
where $\1$ denotes a vector of appropriate size with all its elements equal to $1$, and the inequality is coordinatewise.
A possible relaxation of the system problem is to consider doubly stochastic matrices $M_i$ instead of restricting them to be permutation matrices. 
A matrix is said to be doubly stochastic if all its entries are nonnegative and each row and column sums up to $1$. 
A permutation matrix is hence a doubly stochastic matrix.
Let $\Omega_k$ denote the set of all doubly stochastic $k \times k$ matrices
and let $\Omega_k^*$ denote the set of all $k \times k$ permutation matrices.

Let $M = (M_i, i \in [n])$ denote a profile of doubly stochastic matrices.
The relaxed system problem can then be written as follows:
\begin{equation*}
\begin{aligned}
& \SYSR[z,M ; h,v,A,c]\\
& \text{Maximize}
& & \sum_{i=1}^n \sum_{l = 1}^k h_i(l) v_i(z_i(l))\\
& \text{subject to} && \sum_{i \in R_j} M_i z_i \leq c_j \1, \forall j,\\
&&& z_i(l) \geq z_i(l+1), \forall i, \forall l,\\
&&& M_i \in \Omega_k, \forall i.
\end{aligned}
%\refstepcounter{opt}
%\tag{SYS$[h,v,A,c, \pi]$}\label{opt: dis_sys_res}
\end{equation*}
%where $\1$ is the vector of appropriate size with all elements equal to $1$. 
Then the corresponding primal problem can be written as follows:
\begin{equation}
\begin{aligned}
& \max_{\substack{
	M_i \in \Omega_k \forall i,\\
	z : z_i(l) \geq z_i(l+1) \forall l,\forall i
}
} 
\; \;
\min_{\substack{
	\lambda_j \geq 0, \forall j
}
}
& & \sum_i \sum_l h_i(l)v_i(z_i(l))\\
&&& + \sum_j \lambda_j^T\l[c_j\1 - \sum_{i \in R_j} M_iz_i\r].
\end{aligned}
\refstepcounter{opt}
\tag{\Roman{opt}}\label{opt: max_min_relax}
\end{equation}
where $\lambda_j = (\lambda_j(l))_{l \in [k]} \in \bbR^k_+$.
Let $W_{pr}$ denote the value of this problem. 
Interchanging $\min$ and $\max$ we get the corresponding dual: 
\begin{equation}
\begin{aligned}
& \min_{\substack{
	\lambda_j \geq 0, \forall j
}
}
\; \;
\max_{\substack{
	M_i \in \Omega_k \forall i,\\
	z : z_i(l) \geq z_i(l+1) \forall l,\forall i
}
} 
& & \sum_i \sum_l h_i(l)v_i(z_i(l))\\
&&& + \sum_j \lambda_j^T\l[c_j\1 - \sum_{i \in R_j} M_iz_i\r].
\end{aligned}
\refstepcounter{opt}
\tag{\Roman{opt}}\label{opt: min_max_relax}
\end{equation}
Let $W_{dr}$ denote the value of this problem. 
If the link constraints in the relaxed system problem hold then
\begin{equation}
\label{eq: relax_to_average}
	\frac{1}{k}\sum_{i \in R_j} \sum_{l = 1}^k z_i(l) = \frac{1}{k}\sum_{i \in R_j} \1^T M_i z_i \leq \frac{1}{k}\1^T c_j \1 = c_j.
\end{equation}
This inequality essentially says that the link constraints should hold in expectation. 
Thus we have the following average system problem:
\begin{equation*}
\begin{aligned}
& \SYSA[z;h,v,A,c]\\
& \text{Maximize}
& & \sum_{i=1}^n \sum_{l = 1}^k h_i(l) v_i(z_i(l))\\
& \text{subject to} && \sum_{i \in R_j} \frac{1}{k}\sum_{l=1}^k z_i(l) \leq c_j, \forall j,\\
&&& z_i(l) \geq z_i(l+1), \forall i, \forall l,\\
\end{aligned}
%\refstepcounter{opt}
%\tag{SYS$[h,v,A,c, \pi]$}\label{opt: dis_sys_res}
\end{equation*}
with its corresponding primal problem:
\begin{equation}
\begin{aligned}
& \max_{\substack{
	z : z_i(l) \geq z_i(l+1) \forall l,\forall i
}
} 
\; \;
\min_{\substack{
	\bar \lambda_j \geq 0, \forall j
}
}
& & \sum_i \sum_l h_i(l)v_i(z_i(l))\\
&&& + \sum_j {\bar\lambda_j}\l[c_j - \sum_{i \in R_j} \frac{1}{k}\sum_{l=1}^k z_i(l)\r],
\end{aligned}
\refstepcounter{opt}
\tag{\Roman{opt}}\label{opt: max_min_avg}
\end{equation}
and the dual problem:
\begin{equation}
\begin{aligned}
& \min_{\substack{
	\bar \lambda_j \geq 0, \forall j
}
}
\; \;
\max_{\substack{
	z : z_i(l) \geq z_i(l+1) \forall l,\forall i
}
} 
& & \sum_i \sum_l h_i(l)v_i(z_i(l))\\
&&& + \sum_j {\bar\lambda_j}\l[c_j - \sum_{i \in R_j} \frac{1}{k}\sum_{l=1}^k z_i(l)\r],
\end{aligned}
\refstepcounter{opt}
\tag{\Roman{opt}}\label{opt: min_max_avg}
\end{equation}
where $\bar \lambda_j \in \bbR$ are the dual variables corresponding to the link constraints.
Let $W_{pa}$ and $W_{da}$ denote the values of these primal and dual problems respectively.

Then we have the following relation:
\begin{theorem}
\label{thm: duality_ineq}
For any system problem defined by $h,v,A$ and $c$, we have
	\[
		W_{ps}  \leq W_{pr} = W_{pa} = W_{da} = W_{dr} = W_{ds}.
	\]
\end{theorem}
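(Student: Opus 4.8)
Each of the six quantities is a max--min or a min--max of the same Lagrangian, so weak duality furnishes $W_{ps}\le W_{ds}$, $W_{pr}\le W_{dr}$, and $W_{pa}\le W_{da}$ for free. The plan is to establish the displayed chain through five links: the relaxation bound $W_{ps}\le W_{pr}$; the primal identity $W_{pr}=W_{pa}$; strong duality $W_{pa}=W_{da}$ for the convex average problem; the dual identity $W_{dr}=W_{ds}$; and finally $W_{da}=W_{dr}$. The bound $W_{ps}\le W_{pr}$ is immediate: writing each $\pi_i$ as its permutation matrix $M_i$ turns the constraints of $\SYS$ into those of $\SYSR$, and since $\Omega_k^*\subseteq\Omega_k$ the feasible set of $\SYS$ is contained in that of $\SYSR$ while the objective (which depends only on $z$) is unchanged, so the maximum can only increase.

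\textbf{Primal identity and strong duality.} Because the objective depends only on $z$, to get $W_{pr}=W_{pa}$ it suffices to show the two problems have the same admissible $z$. Inequality \eqref{eq: relax_to_average} already shows that every $z$ admissible in $\SYSR$ is admissible in $\SYSA$, giving $W_{pr}\le W_{pa}$. For the reverse I would exhibit, for any $z$ feasible in $\SYSA$, a doubly stochastic completion: take $M_i=\frac1k\1\1^T\in\Omega_k$ for every $i$. Then $M_iz_i=\big(\frac1k\sum_l z_i(l)\big)\1$, so $\sum_{i\in R_j}M_iz_i=\big(\frac1k\sum_{i\in R_j}\sum_l z_i(l)\big)\1\le c_j\1$ exactly when the average link constraint holds; thus $(z,M)$ is feasible in $\SYSR$ and $W_{pa}\le W_{pr}$. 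For $W_{pa}=W_{da}$, note that $\SYSA$ maximizes a concave objective (each $v_i$ concave, $h_i(l)>0$) subject to affine constraints; the feasible set is nonempty ($z=0$) and compact (the average constraints with $z\ge0$ bound every coordinate, since each route $J_i$ is nonempty), so the value is finite, and because all constraints are affine the refined Slater condition gives strong duality, $W_{pa}=W_{da}$.

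\textbf{Dual identities.} For $W_{dr}=W_{ds}$, fix $\lambda\ge0$ and $z$ with $z_i(l)\ge z_i(l+1)$; the only $M$-dependence in the relaxed inner problem is through $-\sum_i\hat\rho_i^TM_iz_i$, with $\hat\rho_i=\sum_{j\in J_i}\lambda_j$, which is linear in each $M_i$. By the Birkhoff--von Neumann theorem $\Omega_k$ is the convex hull of $\Omega_k^*$, so this linear form is minimized over $\Omega_k$ at a permutation matrix; hence the inner maximum over $M_i\in\Omega_k$ equals that over $\pi_i\in S_k$ for every $\lambda$, and taking $\min_\lambda$ yields $W_{dr}=W_{ds}$. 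Finally, let $\Phi_r(\lambda)$ and $\Phi_a(\bar\lambda)$ denote the inner maxima defining \eqref{opt: min_max_relax} and \eqref{opt: min_max_avg}. Restricting the vector multiplier to the uniform form $\lambda_j=\frac{\bar\lambda_j}{k}\1$ makes $\hat\rho_i=\frac1k\big(\sum_{j\in J_i}\bar\lambda_j\big)\1$ constant, so $\hat\rho_i^TM_iz_i=\frac1k\big(\sum_{j\in J_i}\bar\lambda_j\big)\1^Tz_i$ is independent of $M_i$ (as $\1^TM_i=\1^T$), while the constant term $\sum_j\lambda_j^Tc_j\1$ reduces to $\sum_j\bar\lambda_jc_j$; the $M$-maximization becomes vacuous and $\Phi_r$ collapses exactly to $\Phi_a$. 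Hence $W_{dr}=\min_\lambda\Phi_r(\lambda)\le\min_{\bar\lambda}\Phi_a(\bar\lambda)=W_{da}$, while the reverse $W_{da}=W_{pa}=W_{pr}\le W_{dr}$ follows from the equalities just proven together with weak duality. This closes the chain.

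\textbf{Main obstacle.} The routine links are the relaxation bound and the three weak-duality inequalities; the genuine content lies in the two ``uniform'' reductions --- the matrix $\frac1k\1\1^T$ on the primal side and the multiplier $\frac{\bar\lambda_j}{k}\1$ on the dual side --- which, together with Birkhoff--von Neumann, are precisely what force the relaxed problem to agree with the average problem. The one place demanding care is finiteness: both $\Phi_r$ and $\Phi_a$ can equal $+\infty$ when some $\hat\rho_i$ vanishes (a concave $v_i$ with no price to offset unbounded growth). I would therefore note that all the identities above hold in the extended reals (the two sides are infinite simultaneously), so that the comparisons of minima are unaffected and bite exactly at the minimizing multipliers, where the values are finite.
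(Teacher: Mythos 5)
Your proposal is correct and follows essentially the same route as the paper: the relaxation bound via $\Omega_k^*\subseteq\Omega_k$, the identification of the average problem with the relaxed problem through the uniform matrix $\frac1k\1\1^T$ together with \eqref{eq: relax_to_average}, strong duality for the concave average problem, the uniform-multiplier restriction to get $W_{dr}\le W_{da}$ (with the reverse from weak duality and the primal equalities), and Birkhoff--von Neumann for $W_{dr}=W_{ds}$. Your normalization $\lambda_j=\frac{\bar\lambda_j}{k}\1$ and the remark about extended-real values are minor refinements of the same argument, not a different approach.
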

\begin{proof}

	As observed earlier, if we replace the condition $M_i \in \Omega_k$ in the primal relaxed problem~\eqref{opt: max_min_relax} with the condition $M_i \in \Omega_k^*$, we get the primal system problem~\eqref{opt: max_min}. Since $\Omega_k^* \subset \Omega_k$, we have $W_{ps} \leq W_{pr}$.

	The constraint $\sum_{i \in R_j} \frac{1}{k}\sum_{l=1}^k z_i(l) \leq c_j$ can equivalently be written as
	\[
		\sum_{i \in R_j} \bar M_i z_i \leq c_j \1,
	\]
	for all $j$, where $\bar M_i$ is the matrix with all its entries equal to $1/k$.
	Thus, if we replace $M_i \in \Omega_k$ in the primal relaxed problem~\eqref{opt: max_min_relax} with $\bar M_i$, we get the primal average problem~\eqref{opt: max_min_avg}. 
	This implies that $W_{pa} \leq W_{pr}$.
	However, as observed earlier in Equation \eqref{eq: relax_to_average}, if for some fixed allocations $z$ the link constraints are satisfied with respect to any doubly stochastic matrix $M_i$, then they are also satisfied with respect to $\bar M_i$. 
	Thus the maximum of the average system problem $\SYSA[z;h,v,A,c]$ is at least as much as the maximum of the relaxed system problem $\SYSR[h,v,A,c]$. 
	Since the maximum of the relaxed system problem is equal to the value of its corresponding primal problem, we get $W_{pa} \geq W_{pr}$. 
	Thus, we have established that $W_{pr} = W_{pa}$.

	The average system problem has a concave objective function with linear constraints. Thus, strong duality holds, and we get $W_{pa} = W_{da}$.

	We now show that $W_{da} = W_{dr}$. 
	%From the fact that $\Omega^*_k \subset \Omega_k$ and the relation between the average and relaxed problems, we get $W_{da} \leq W_{dr}$.
    From $W_{pr} \le W_{dr}$ and $W_{pr} = W_{pa} = W_{da}$,
    we get $W_{da} \leq W_{dr}$.
	Suppose $\lambda_j = \bar \lambda_j \1$. 
	Then the objective function of the relaxed dual problem~\eqref{opt: min_max_relax},
	\begin{align*}
		 &\sum_{i=1}^n \sum_{l=1}^k h_i(l)v_i(z_i(l))  + \sum_{j=1}^m \lambda_j^T\l[c_j\1 - \sum_{i \in R_j} M_i z_i\r]\\
		 &= \sum_{i=1}^n \sum_{l=1}^k h_i(l)v_i(z_i(l)) + \sum_{j=1}^m \bar \lambda_j \l[c_j - \sum_{i \in R_j} \frac{1}{k}\sum_{l=1}^k z_i(l) \r],
	\end{align*}
	equals the objective function of the dual average problem~\eqref{opt: max_min_avg}. 
	This implies that $W_{dr} \leq W_{da}$. 
	This established that $W_{da} = W_{dr}$.

	Since any doubly stochastic matrix $M_i$ is a convex combination of permutation matrices by the Birkhoff-von Neumann theorem, 
	in the dual problem~\eqref{opt: min_max_relax}, 
	for any fixed $\lambda_j, z_i$, the optimum can be achieved by a permutation matrix. 
	This established that $W_{dr} = W_{ds}$. 

	This completes the proof.
\end{proof}

Thus the duality gap is a manifestation of the ``hard'' link constraints.
In the proof of the above theorem we saw that the relaxed problem is ``equivalent'' to the average problem and strong duality holds for this relaxation. 
We will later study the average problem in further detail (Section~\ref{sec: relaxtion}).

%%%%%%%%%%%%%%%%%%%%%%%%%%%%%%%%%%%%%%%%%%%%%%%
%Duality gap
%%%%%%%%%%%%%%%%%%%%%%%%%%%%%%%%%%%%%%%%%%%%%%%

We observed earlier in Lemma \ref{lem: opt_pi_strong_dual} that if strong duality holds in the system problem, then the optimum permutation profile $\pi^*$ 
%of the dual problem \eqref{opt: min_max} 
satisfies $\eqref{eq: pi_opt_order}$. 
Consider a simple example of two players sharing a single link. 
Suppose that, at the optimum, $\lambda(l)$ are the prices for $l \in [k]$ corresponding to this link under the different outcomes, and suppose not all of these are equal. 
Then the optimum permutation profile of the dual problem will align both players' allocations in the same order, i.e. the high allocations of player $1$ will be aligned with the high allocations of player $2$. 
However, we can directly see from the system problem that an optimum $\pi^*$ should align the two players' allocations in opposite order. 
The following example builds on this observation and shows that strong duality need not hold for the system problem.

\begin{example}
	\label{ex: duality_gap}
	%\blue{Modify so that the value functions take values in $\bbR_+$.}
	Consider the following example with two players $\{1,2\}$ and a single link with capacity $2.9$. 
	Let $k=2$. 
	Let the corresponding CPT characteristics of the two players be as follows:
	\begin{align*}
		h_1(1) &= \frac{1}{3}, & h_1(2) &= \frac{2}{3},\\
		h_2(1) &= \frac{5}{6}, & h_2(2) &= \frac{1}{6},\\
		v_1(x) &= \log (x + 0.05) + 3, & v_2(x) &= \frac{2 \log(x + 0.05) + 3(x + 0.05)}{5} + 3. 
	\end{align*}
	For this problem, it is easy to see that $\pi_1 = (1, 2)$ and $\pi_2 = (2, 1)$ is an optimal permutation. 
	Solving the fixed-permutation system problem with respect to this permutation we get optimal value equal to $7.5621$. 
	The corresponding variable values are
	\begin{align*}
		z_1(1) &= y_1(1) = 1.95, & z_1(2) &= y_1(2) = 0.95,\\
		z_2(1) &= y_2(2) = 1.95, & z_2(2) &= y_2(1) = 0.95,
	\end{align*}
	and the dual variable values are
	\begin{align*}
		\lambda_1(1) &= \frac{1}{6}, & \lambda_1(2) = \frac{2}{3},
	\end{align*}
	and $\alpha_i(l) = 0$, for $i = 1,2, l = 1,2$.
	One can check that these satisfy the KKT conditions.

	Let us now evaluate the value of the dual problem \eqref{opt: min_max}. 
	By symmetry, we can assume without loss of generality that $\lambda_1(1) \leq \lambda_1(2)$. 
	As a result, optimal permutations for the dual problem are given by $\pi_1 = \pi_2 = (1, 2)$. 
	For fixed $\lambda_1(1)$ and $\lambda_1(2)$, we solve the following optimization problem:
\begin{equation*}
\begin{aligned}
&\max_{\substack{
	z_1(1) \geq z_1(2) \geq 0\\
	z_2(1) \geq z_2(2) \geq 0
}
}
& & \frac{1}{3}\log(z_1(1) + 0.05) + \frac{2}{3}\log(z_1(2) + 0.05)\\
&&& + \frac{5}{6}\l[\frac{2\log(z_2(1) + 0.05) + 3(z_2(1) + 0.05)}{5}\r]\\
&&&+ \frac{1}{6}\l[\frac{2\log(z_2(2) + 0.05) + 3(z_2(2) + 0.05)}{5}\r]\\
&&&  - \lambda_1(1)[z_1(1) + z_2(1)] - \lambda_1(2)[z_1(2) + z_2(2)] \\
&&&+ 2.9[\lambda_1(1) + \lambda_1(2)] + 6.
\end{aligned}
\refstepcounter{opt}
\tag{\Roman{opt}}\label{opt: max_dual_ex_opt}
\end{equation*}

If $\lambda_1(1) \leq 0.5$, then the value of the problem \eqref{opt: max_dual_ex_opt} is equal to $\infty$ (let $z_2(1) \to \infty$). 
If $\lambda_1(1) > 0.5$ (and hence $\lambda_1(2) > 0.5$ because $\lambda_1(2) \geq \lambda_1(2)$), 
then we observe that the effective domain of maximization in the problem \eqref{opt: max_dual_ex_opt} is compact and problem \eqref{opt: max_dual_ex_opt} has a finite value. 
Hence it is enough to consider $\lambda_1(1) > 0.5$.
At the optimum there exist $\alpha_1(1),\alpha_1(2), \alpha_2(1), \alpha_2(2) \geq 0$ such that
\begin{align*}
	\lambda_1(1) &= \frac{1}{3} \frac{1}{z_1(1) + 0.05} + \alpha_1(1), \\
	\lambda_1(2) &= \frac{2}{3} \frac{1}{z_1(2) + 0.05} - \alpha_1(1) + \alpha_1(2),\\
	\lambda_1(1) &= \frac{1}{3} \frac{1}{z_2(1) + 0.05} + \frac{1}{2} + \alpha_2(1),\\
	\lambda_1(2) &= \frac{1}{15} \frac{1}{z_2(2) + 0.05} + \frac{1}{10} -\alpha_2(1) + \alpha_2(2),
\end{align*}
and
\begin{align*}
	\alpha_1(1)[z_1(1) - z_1(2)] &= 0, & \alpha_1(2)z_1(2) &= 0,\\
	\alpha_2(1)[z_2(1) - z_2(2)] &= 0, & \alpha_2(2)z_2(2) &= 0.
\end{align*}

We now consider each of the sixteen ($4 \times 4$) cases based on whether the inequalities $z_i(l) \geq z_i(l+1)$ for $i=1,2$ and $l = 1,2$, hold strictly or not. 

\textbf{Case A1 ($z_1(1) = 0, z_1(2) = 0$)}. 
Then $\lambda_1(1) \geq 1/0.15$.

\textbf{Case B1 ($z_1(1) > 0, z_1(2) = 0$)}. 
Then $\lambda_1(1) < 1/0.15 , \lambda_1(2) \geq 2/0.15$, and 
\[
	\alpha_1(1) = 0, z_1(1) = \frac{1}{3\lambda_1(1)} - 0.05.
\]

\textbf{Case C1 ($z_1(1) = z_1(2) > 0$)}. 
Then $\lambda_1(2)/2 \leq \lambda_1(1) \leq \lambda_1(2), \lambda_1(1) + \lambda_1(2) < 1/0.05$, and
\begin{equation*}
\label{eq: z_1_opt_2}
	\alpha_1(1) = \frac{2\lambda_1(1) - \lambda_1(2)}{3}, \alpha_1(2) = 0, z_1(1) = z_1(2) = \frac{1}{\lambda_1(1) + \lambda_1(2)} - 0.05.
\end{equation*}

\textbf{Case D1 ($z_1(1) > z_1(2) > 0$)}. 
Then $\lambda_1(1) < \lambda_1(2)/2, 0 < \lambda_1(1) < 1/0.15, 0 < \lambda_1(2) < 2/0.15$, and
\begin{equation*}
\label{eq: z_1_opt_1}
	\alpha_1(1) = 0, \alpha_1(2) = 0, z_1(1) = \frac{1}{3\lambda_1(1)} - 0.05, z_1(2) = \frac{2}{3\lambda_1(2)} - 0.05.
\end{equation*}

\textbf{Case A2 ($z_2(1) = 0, z_2(2) = 0$)}.
Then $\lambda_1(1) \geq (1/0.15) + 0.5$.

\textbf{Case B2 ($z_2(1) > 0, z_2(2) = 0$)}. 
Then $0.5 < \lambda_1(1) < 2.15/0.3, \lambda_1(2) \geq (1/0.75) + 0.1$, and 
\[
	\alpha_2(1) = 0, z_2(1) = \frac{2}{6\lambda_1(1) - 3} - 0.05.
\]

\textbf{Case C2 ($z_2(1) = z_2(2) > 0$)}. 
This is not possible since $\lambda_1(1) \leq \lambda_1(2)$.

\textbf{Case D2 ($z_2(1) > z_2(2) > 0$)}. 
Then $0.5 < \lambda_1(1) < 2.15/0.3 , 0.1 < \lambda_1(2) < 2.15/1.5$, and
\begin{equation*}
\label{eq: z_2_opt}
	\alpha_2(1) = 0, \alpha_2(2) = 0, z_2(1) = \frac{2}{6\lambda_1(1) - 3} - 0.05, z_2(2) = \frac{2}{30\lambda_1(2) - 3} - 0.05.
\end{equation*}

If case A1 or case A2 holds, then $\lambda_1(1) \geq 1/0.15$. For any fixed $\lambda_1(1), \lambda_1(2)$, by choosing $z_i(l), i=1,2, l = 1,2$ small enough (respecting the conditions imposed by the corresponding cases), we get that the value of problem \eqref{opt: max_dual_ex_opt} is greater than or equal to $2.9*(1/0.15) = 19.3333 > 7.5621$. 
Similarly, if case B1 holds, then $\lambda_1(2) \geq 2/0.15$, and we get that the value of problem \eqref{opt: max_dual_ex_opt} is greater than or equal to $2.9*(2/0.15) = 38.6667 > 7.5621$.
For the remaining $4$ cases, substituting the corresponding expressions for $z_i(l), i = 1,2, l= 1,2$ in the objective function \eqref{opt: max_dual_ex_opt} 
and evaluating the optimum over feasible pairs $(\lambda_1(1), \lambda_1(2))$ for each pair of cases $\{\text{C1,D1}\} \times \{\text{B2,D2}\}$, the minimum is achieved for the case (C1,D2) and has value equal to $8.2757$. Numerical evaluation for each of these cases gives rise to the minimum values as shown in table \ref{fig: dualiy_gap_cases}.
\begin{figure}[t]
 \centering
  	\begin{tabular}{c | c | c |}
  	    \multicolumn{1}{c}{}& \multicolumn{1}{c}{B2} & \multicolumn{1}{c}{D2}\\
  	    \cline{2-3}
  	  C1 & $10.2284$ & $8.2757$\\
  	 \cline{2-3}
  	  D1 & $10.1814$ & $9.5006$\\
  	 \cline{2-3}
  	\end{tabular}
  	\caption{The numbers in the cells denote the optimum value of the objective function \eqref{opt: max_dual_ex_opt} in the corresponding cases.}\label{fig: dualiy_gap_cases}
  \end{figure} 
%we get table \ref{fig: dualiy_gap_cases}.
%All these cases together account for $\lambda_1(2) \geq \lambda_1(1) > 0.5$. 
Thus the optimal dual value is $8.2757$ and this is strictly greater than the primal value. 
\end{example}

\begin{theorem}
\label{thm: nphardness}
	The primal problem \eqref{opt: max_min} is NP-hard.
\end{theorem}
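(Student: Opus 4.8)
The plan is to exhibit a polynomial-time reduction from an NP-complete problem to the decision version ``is $W_{ps} \geq T$?'' of the system problem. The key leverage is Theorem~\ref{thm: decompostion}: for a \emph{fixed} permutation profile $\pi$, the fixed-permutation system problem $\SYSRES[z;\pi,h,v,A,c]$ is a convex program and is solvable in polynomial time. Hence all of the computational difficulty must reside in the outer discrete optimization over $\pi \in S_k^n$, and the natural strategy is to encode a hard combinatorial problem \emph{entirely} in the choice of permutation profile, while tuning the continuous data $h,v,A,c$ so that the optimal allocation vector $z$ is essentially pinned down in advance. I would reduce from \textsc{Partition} (a variant such as \textsc{3-Partition} would upgrade the conclusion to strong NP-hardness for growing $k$, but \textsc{Partition} suffices here).

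The combinatorial core is already visible in the single-link case with $k=2$, where all players share one link of capacity $c$ and each $\pi_i$ is either the identity or the transposition. Suppose the data force player $i$'s two allocation levels to prescribed values $a_i \geq b_i \geq 0$. Then in outcome $1$ player $i$ contributes $y_i(1) \in \{a_i, b_i\}$ and in outcome $2$ the complementary value, so that $y_i(1) + y_i(2) = a_i + b_i$ is fixed. Choosing the capacity so that $2c = \sum_i (a_i + b_i)$ forces both binding constraints $\sum_i y_i(1) = \sum_i y_i(2) = c$. Writing $d_i := a_i - b_i$ and letting $S$ be the set of players assigned $a_i$ to outcome $1$, feasibility of the permutation profile holds iff $\sum_{i \in S} d_i = \frac{1}{2}\sum_i d_i$. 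Thus, at the prescribed allocation, the existence of a feasible permutation profile is \emph{exactly} the \textsc{Partition} question for the multiset $(d_i)$, establishing that the discrete layer is the locus of hardness.

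To convert this feasibility hardness into hardness of the optimization value, I would design $h,v,A,c$ so that the target allocation $(a_i,b_i)$ is the unconstrained optimum of the objective and so that a YES-instance of \textsc{Partition} admits a permutation profile achieving it, yielding a computable threshold value $T := \sum_i [h_i(1)v_i(a_i) + h_i(2)v_i(b_i)]$. For a NO-instance, every permutation profile leaves at least one outcome over capacity at the target allocation by some minimal integral slack $\delta > 0$, forcing a reduction of some allocation and hence, by strict monotonicity of $v_i$, a value strictly below $T$. The reduction then outputs the instance together with $T$, and $W_{ps} = T$ iff the \textsc{Partition} instance is a YES-instance.

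The main obstacle is that the value functions are increasing and concave, so players always prefer more throughput and the allocation $z$ cannot be pinned directly: with genuinely strictly-increasing $v_i$, the slack in an under-full outcome would simply be consumed, destroying the intended correspondence. Overcoming this is the crux, and I would handle it by choosing \emph{saturating} concave value functions whose marginal utility drops sharply past the threshold $a_i$ (so allocations beyond the target are nearly worthless), together with a careful choice of capacities and weights $h_i$ ensuring simultaneously that (i) the target allocation is optimal among feasible profiles and (ii) the YES/NO gap in $W_{ps}$ is bounded away from zero. The delicate bookkeeping is to keep all data rational and of polynomial size, to make the threshold $T$ exactly computable, and to verify that the forced utility loss in the NO-case is robustly positive; carrying out this gap analysis rigorously, rather than the \textsc{Partition} encoding itself, is where the real work lies.
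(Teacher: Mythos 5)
You have the right reduction source (\textsc{Partition}), the right parameter regime ($k=2$), and the right diagnosis of where the hardness lives (the discrete choice of $\pi$, the continuous layer being convex by the analysis behind Theorem~\ref{thm: decompostion}); all of this matches the paper. But what you have written is a plan, and the step you explicitly defer --- choosing $h, v, A, c$ so that the target allocation $(a_i,b_i)$ is optimal in the YES case and the value drops below $T$ by a certified amount in the NO case --- is where your proposed mechanism breaks down rather than merely requiring bookkeeping. First, with a saturating concave $v_i$ the unconstrained optimum of player $i$'s term $h_i(1)v_i(z_i(1))+h_i(2)v_i(z_i(2))$ is the \emph{deterministic} point $z_i(1)=z_i(2)=a_i$, not the split $(a_i,b_i)$ with $a_i>b_i$; so the target allocation is never ``the unconstrained optimum of the objective'' as your plan requires, and whether the \emph{constrained} optimum lands on the split depends on a quantitative trade-off between the overweighting $h_i(1)>1/2$ and the curvature of $v_i$ (sharper saturation pushes toward equalized allocations, working against the split you need). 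Second, in the NO case the optimizer does not simply pay an ``integral slack $\delta$'' at the target point: it re-optimizes $z$ continuously over all feasible lotteries for every permutation profile, and you have given no argument bounding the resulting value away from $T$. Without an explicit construction of the $v_i$ and a proof of both directions, the reduction is not established.

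The paper resolves exactly this tension by a different pinning device: it keeps $v_i(x)=x$ linear, so the lottery-versus-deterministic comparison is governed purely by the weights $h_i(1)=1-\epsilon>\epsilon=h_i(2)$ and each player unambiguously wants all her throughput in the outcome carrying weight $1-\epsilon$; the cap $a_i=c_i$ is enforced by a dedicated per-player link $y_i\le c_i$ rather than by curvature, together with one shared link $\sum_i y_i\le\frac{1}{2}\sum_i c_i$. At the candidate optimum $z_i(1)=c_i$, $z_i(2)=0$, the shared constraint is satisfiable in both outcomes iff the $c_i$ admit an integer partition, and the converse follows from a two-line computation comparing the aggregate utility bound with the shared link constraint. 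To repair your argument you would either need to replace the saturation device with such hard per-player capacity constraints (which essentially reproduces the paper's construction), or supply the missing quantitative lemma showing that for your specific $v_i$ the weighting effect dominates the curvature and the NO-case loss is bounded below by a polynomially computable quantity.
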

\begin{proof}
	We describe a polynomial time procedure that reduces an instance of the  integer partition problem to a special case of the primal problem. 
	Given a set of positive integers $\{c_1,c_2,\dots,c_n\}$, the integer partition problem is to find a subset $S \subset [n]$, such that
	\[
		\sum_{i \in S} c_i = \sum_{i \notin S} c_i.
	\]
	If such a set $S$ exists, then we say that an integer partition exists.
	Consider a network with $n$ players and $n+1$ link constraints given by
	\[
		y_i \leq c_i, \forall i \in [n], \text{ and } \sum_{i=1}^n y_i \leq \frac{\sum_{i =1}^n c_i}{2}.
	\]
	It is easy to realize a network with these link constraints. 
	Let $k = 2$.
	Let the CPT characteristics of all the players be as follows:
	\begin{align*}
		h_i(1) = 1 - \epsilon, h_i(2) = \epsilon, v_i(x_i) = x_i, \forall i \in [n],
	\end{align*}
	where $\epsilon = 1/10$.
	%Since
	%\footnote{Strictly speaking, $h_i(2)$ cannot be equal to zero because the weighting function is assumed to be strictly increasing. 
	%However, we can still get the result by modifying the argument by taking $h_i(2)$ sufficiently small and $h_i(1) = 1 - h_i(2)$.}
	% $h_i(2) = 0$, 
	% there exists an optimal allocation with $z_i(2) = 0$ for all $i$. 
	% Since $k = 2$, the permutation profile $\pi = (\pi_i, i \in [n])$ is equivalent to a partition $S \cup S' = [n]$ where $S = \{i \in [n] | \pi_i(1) = 1\}$. 
	% The maximum value of aggregate utility is equal to $\sum_{i=1}^n c_i$, if and only if $z_i(1) = c_i$ for all $i$ and there exists a solution to the integer partition problem.
	Let $W_{ps}$ denote the optimal value of the system problem. 
	We show that $W_{ps} \geq T := (1-\epsilon)\sum_{i \in [n]} c_i$ if and only if an integer partition exists.
	Suppose an integer partition exists and is given by the set $S$, consider the allocation $\pi_i = [1,2]$ if $i \in S$ and $\pi_i = [2,1]$ otherwise, $z_i(1) = c_i, z_i(2) = 0$ for all $i \in [n]$. 
	The aggregate utility for this allocation is equal to $T$ and hence $W_{ps} \geq T$. 
	Suppose $W_{ps} \geq T$.
	Then there an allocation, say $z^*$ and $\pi^*$ with aggregate utility at least $T$.
	Since $k = 2$, $\pi^*$ actually defines a partition of $[n]$, given by $S = \{i \in [n] : \pi_i(1) = 1\}$. 
	We have, the aggregate utility
	\[
		W(1) + W(2) \geq T,
	\]
	where
	\begin{align*}
		W(1) &:= \sum_{i \in S} (1 - \epsilon) z_i(1) + \sum_{i \notin S} \epsilon z_i(2),\\
		W(2) &:=  \sum_{i \notin S} (1 - \epsilon) z_i(1) + \sum_{i \in S} \epsilon z_i(2).
	\end{align*}
	Hence at least one of $W(1)$ and $W(2$ is at least as big as $T/2$.
	Without loss of generality, let $W(1) \geq T/2$.
	Thus we have,
	\[
		\sum_{i \in S} z_i(1) + \frac{\epsilon}{1 - \epsilon}\sum_{i \notin S} z_i(2) \geq \frac{\sum_{i \in [n]} c_i}{2}.
	\]
	However, since $z^*$ is feasible, the link constraints give
	\[
		\sum_{i \in S} z_i(1) + \sum_{i \notin S} z_i(2) \leq \frac{\sum_{i \in [n]} c_i}{2}.
	\]
	Since $\epsilon < 1/2$, we should have $\sum_{i \notin S} z_i(2) = 0$ and $\sum_{i \in S} z_i(1) = (\sum_{i \in [n]} c_i)/2$, implying that $S$ forms an integer partition.
	This completes the proof.
\end{proof}

%!TEX root = ../main.tex

\section{Average System Problem and Optimal Lottery Structure}
\label{sec: relaxtion}

Suppose it is enough to ensure that the link constraints are satisfied in expectation, as in the average system problem. 
%\blue{Find examples, applications.} 
Consider the function $V_i^\text{avg}(\bar z_i)$ on $\bbR_+$ given by the value of the following optimization problem:
\begin{equation*}
\begin{aligned}
& \text{Maximize}
& &\sum_{l = 1}^k h_i(l) v_i(z_i(l))\\
& \text{subject to} && \frac{1}{k} \sum_{l = 1}^k z_i(l) = \bar z_i,\\
&&& z_i(l) \geq z_i(l+1), \forall l \in [k].\\
\end{aligned}
\refstepcounter{opt}
\tag{\Roman{opt}}\label{opt: avg_CPT_value}
\end{equation*}

Let $Z_i(\bar z_i)$ denote the set of feasible $(z_i(l))_{l \in [k]}$ in the above problem for any fixed $\bar z_i \geq 0$.
We observe that $Z_i(\bar z_i)$ is a closed and bounded polytope, and hence $V_i^\text{avg}(\bar z_i)$ is well defined. 

\begin{lemma}
\label{lem: envelope_diff}
	For any continuous, differentiable, concave and strictly increasing value function $v_i(\cdot)$, the function $V_i^\text{avg}(\cdot)$ is continuous, differentiable, concave and strictly increasing in $\bar z_i$.
\end{lemma}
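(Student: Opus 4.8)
The plan is to verify the four properties in order of increasing difficulty. Concavity and strict monotonicity come from elementary perturbation arguments, continuity then comes essentially for free, and the real work is differentiability, which I would handle through a Lagrangian-dual (envelope) representation. First, the feasible set $Z_i(\bar z_i)$ is cut out by the linear ordering inequalities $z_i(l)\ge z_i(l+1)$ together with the affine constraint $\frac1k\sum_l z_i(l)=\bar z_i$; hence it is convex and behaves affinely in $\bar z_i$, in the sense that if $z\in Z_i(\bar z_i)$ and $z'\in Z_i(\bar z_i')$ then $\theta z+(1-\theta)z'\in Z_i(\theta\bar z_i+(1-\theta)\bar z_i')$. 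Taking $z,z'$ to be maximizers and using concavity of $v_i$ together with $h_i(l)>0$ gives concavity of $V_i^{\text{avg}}$. For strict monotonicity, given $\bar z_i<\bar z_i'$ and a maximizer $z$ for $\bar z_i$, the shifted vector $z+(\bar z_i'-\bar z_i)\1$ lies in $Z_i(\bar z_i')$ (the ordering and nonnegativity are preserved and the average rises by exactly $\bar z_i'-\bar z_i$); since $v_i$ is strictly increasing and each $h_i(l)>0$, evaluating the objective there yields $V_i^{\text{avg}}(\bar z_i')>V_i^{\text{avg}}(\bar z_i)$.

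Being finite and concave on $\bbR_+$, $V_i^{\text{avg}}$ is automatically continuous on the interior $(0,\infty)$. At the endpoint the average constraint forces $z_i(l)=0$ for all $l$ when $\bar z_i=0$, so $V_i^{\text{avg}}(0)=v_i(0)\sum_l h_i(l)=v_i(0)$ (using $\sum_l h_i(l)=w_i(1)-w_i(0)=1$); comparing against the feasible balanced choice $z_i(l)\equiv\bar z_i$ shows $\liminf_{\bar z_i\to 0^+}V_i^{\text{avg}}(\bar z_i)\ge v_i(0)$, which with upper semicontinuity of the concave $V_i^{\text{avg}}$ gives continuity at $0$.

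For differentiability I would dualize only the average constraint. Define the partial dual
\[
  \phi_i(\nu):=\max_{z_i(1)\ge\cdots\ge z_i(k)\ge 0}\left[\sum_{l=1}^k h_i(l)\,v_i(z_i(l))-\frac{\nu}{k}\sum_{l=1}^k z_i(l)\right],
\]
which is convex in $\nu$ as a pointwise supremum of affine functions. Since \eqref{opt: avg_CPT_value} is a concave program with linear constraints and Slater's condition holds for $\bar z_i>0$, strong duality gives $V_i^{\text{avg}}(\bar z_i)=\min_{\nu\ge 0}[\phi_i(\nu)+\nu\bar z_i]$; in particular the concave function $V_i^{\text{avg}}$ is differentiable at $\bar z_i$ precisely when this minimizer $\nu^\ast(\bar z_i)$ is unique, and then $\frac{d}{d\bar z_i}V_i^{\text{avg}}(\bar z_i)=\nu^\ast(\bar z_i)$ by the envelope theorem. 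Because a convex function has a unique minimizer of $\phi_i(\nu)+\nu\bar z_i$ for every slope $-\bar z_i$ exactly when $\phi_i$ contains no affine segment, differentiability of $V_i^{\text{avg}}$ on all of $(0,\infty)$ reduces to showing $\phi_i$ is strictly convex, equivalently that the inner optimal total $S_i(\nu):=\tfrac1k\sum_l z_i^\ast(l;\nu)$ is strictly decreasing in $\nu$ wherever finite.

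This last step is the main obstacle, and it is where differentiability of $v_i$ (not merely concavity) is essential: since $v_i$ is only concave the inner maximizer need not be unique, so one cannot simply read off a unique dual variable. A preliminary observation is that concavity plus strict monotonicity forces $v_i'>0$ everywhere (if $v_i'$ vanished at a point it would vanish beyond it, contradicting strict increase), so $\nu^\ast>0$. To prove $S_i(\nu)$ is strictly decreasing I would argue that $\phi_i$ admits no affine segment by contradiction: on an interval $\nu\in(\nu_1,\nu_2)$ where $S_i$ is constant, partition the indices of an optimizer into the maximal blocks on which the ordering constraints bind and $z_i^\ast$ is constant, and write the first-order conditions; within each block the common allocation value must satisfy a fixed relation proportional to $\nu$, which is impossible for a continuum of $\nu$ because $v_i'$ is a genuine (single-valued, continuous) function. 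Thus a flat value of $v_i'$ can only produce isolated corners of $\phi_i$ — corresponding to harmless affine pieces of $V_i^{\text{avg}}$ — but never a flat piece of $\phi_i$, which is what a kink in $V_i^{\text{avg}}$ would require. Establishing this block-wise first-order argument carefully is the step I expect to demand the most care; once it is in place, strict convexity of $\phi_i$ yields a unique $\nu^\ast(\bar z_i)$, hence differentiability on $(0,\infty)$, with the one-sided derivative at $0$ following from the same representation.
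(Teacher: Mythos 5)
Your treatment of strict monotonicity, concavity, and continuity coincides with the paper's (constant shift $z_i^*+\tau\mathbf{1}$ for monotonicity, convex combinations of maximizers for concavity, concavity plus a direct limit argument at $0$ for continuity) and is correct. For differentiability, however, you take a genuinely different route. The paper argues directly: letting $\hat l$ be the smallest index with $z_i^*(\hat l)>z_i^*(\hat l+1)$ (which exists since $\bar z_i>0$), it perturbs only the top $\hat l$ coordinates by $\pm\frac{k}{\hat l}\tau$ to produce explicit points of $Z_i(\bar z_i\pm\tau)$, shows that both one-sided derivatives are bounded on the appropriate sides by $\gamma^*=\frac{k}{\hat l}\sum_{l\le\hat l}h_i(l)v_i'(z_i^*(l))$, and closes with the concavity inequality between left and right derivatives. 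This needs no duality and no discussion of the structure of the optimizer beyond locating $\hat l$. Your Fenchel-conjugate route (uniqueness of the minimizing multiplier $\nu^*$) is legitimate and would in addition identify $V_i^{\text{avg}\,\prime}(\bar z_i)$ with the equilibrium price, but it is considerably heavier, and the step you yourself flag as delicate is where the gaps sit.

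Two concrete issues in that step. First, the reduction to ``$\phi_i$ is strictly convex, equivalently $S_i(\nu)$ strictly decreasing wherever finite'' is too strong and in fact false: whenever the unconstrained supremum of $\sum_l h_i(l)v_i(z_i(l))-\frac{\nu}{k}\sum_l z_i(l)$ over the ordered cone is attained at $z\equiv 0$ (which happens for all $\nu\ge k\,v_i'(0)\max_{l'}\frac{1}{l'}\sum_{l\le l'}h_i(l)$, and for every $\nu$ when $v_i$ is linear), $\phi_i$ is constant on a ray, so it has an affine segment. What you actually need is that $\phi_i$ has no affine segment of strictly negative slope, since only those produce nondifferentiability of $V_i^{\text{avg}}$ at some $\bar z_i>0$. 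Second, the block argument conflates constancy of the average $S_i(\nu)$ over an interval with constancy of the optimizer: the maximizer need not be unique, the block partition can change with $\nu$, and a constant average does not force each block value to be constant. The clean repair is the standard monotonicity exchange: if $\nu_1<\nu_2$ give optimizers with averages $s_1,s_2$, adding the two optimality inequalities yields $s_1\ge s_2$, and if $s_1=s_2>0$ then a common $z^*$ is optimal for both $\nu_1$ and $\nu_2$; summing the first-order conditions over the top maximal constant block (whose value $\zeta>0$ and whose boundary multipliers vanish) forces $v_i'(\zeta)=\frac{l_1\nu/k}{\sum_{l\le l_1}h_i(l)}$ for both values of $\nu$, a contradiction. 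With that substitution your argument goes through, but as written the differentiability step does not yet close.
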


We prove this lemma in Appendic \ref{sec: appendix}.
The average system problem $\SYSA[z;h,v,A,c]$ can be written as
\begin{equation*}
\begin{aligned}
%& \SYSA[z;h,v,A,c]\\
& \text{Maximize}
& & \sum_{i=1}^n V_i^\text{avg}(\bar z_i)\\
& \text{subject to} && \sum_{i \in R_j} \bar z_i \leq c_j, \forall j,\\
&&& \bar z_i \geq 0, \forall i.
\end{aligned}
%\refstepcounter{opt}
%\tag{SYS$[h,v,A,c, \pi]$}\label{opt: dis_sys_res}
\end{equation*}
%If $V_i^\text{avg}(\bar z_i)$ is differentiable in $\bar z_i$, then 
Kelly \cite{kelly1997charging} showed that this problem can be decomposed into user problems, one for each user $i$,
\begin{equation*}
\begin{aligned}
%& \text{USER\_AVG}[z_i; \bar \rho^*_i, h_i,v_i]\\
& \text{Maximize}
& &V_i^\text{avg}(\bar z_i) - \bar \rho_i \bar z_i\\
& \text{subject to} && \bar z_i \geq 0,\\
\end{aligned}
%\refstepcounter{opt}
%\tag{USER DIS}\label{opt: dis_user}
\end{equation*}
and a network problem,
\begin{equation*}
\begin{aligned}
& \text{Maximize}
& & \sum_{i=1}^n \bar \rho_i \bar z_i\\
& \text{subject to} && \sum_{i \in R_j} \bar z_i \leq c_j, \forall j,\\
&&& \bar z_i \geq 0, \forall i,\\
\end{aligned}
%\refstepcounter{opt}
%\tag{USER DIS}\label{opt: dis_user}
\end{equation*}
in the sense that there exist $\bar \rho_i \geq 0, \forall i \in [n]$, such that the optimum solutions $\bar z_i$ of the user problems, for each $i$, solve the network problem and the average system problem.
Note that this decomposition is different from the one presented in Section \ref{sec: equilibrium}. 
Here the network problem aims at maximizing its total revenue $\sum_{i=1}^n \bar \rho_i \bar z_i$, instead of maximizing a weighted aggregate utility where the utility is replaced with a proxy logarithmic function. 
The above decomposition is not as useful as the decomposition in Section \ref{sec: equilibrium} in order to develop iterative schemes that converge to equilibrium. 
However, the above decomposition motivates the following user problem:
\begin{equation*}
\begin{aligned}
& \text{USER\_AVG}[z_i; \bar \rho_i, h_i,v_i]\\
& \text{Maximize}
& &\sum_{l = 1}^k h_i(l) v_i(z_i(l)) - \frac{\bar \rho_i}{k} \sum_{l = 1}^k z_i(l)\\
& \text{subject to} && z_i(l) \geq z_i(l+1), \forall l \in [k],\\
\end{aligned}
%\refstepcounter{opt}
%\tag{USER DIS}\label{opt: dis_user}
\end{equation*}
where, as before, $z_i(k+1) = 0$.

We observed in Proposition \ref{thm: duality_ineq} that strong duality holds in the average system problem. 
Let $z^*$ be the optimum lottery scheme that solves this problem.
Then, first of all, $z^*$ satisfies $z_i^*(l) \geq z_i^*(l+1)\;\forall i,l$ and is feasible in expectation, i.e., $\bar z^* := (\bar z_i^*)_{i \in [n]} \in \pcal{F}$, where $\bar z_i^* := (1/k) \sum_{l} z_i^*(l)$. 
Further, $z^*$ optimizes the objective function of the average system problem.
Besides, there exist $\bar \lambda_j^* \geq 0$ for all $j$
 such that 
the primal average problem~\eqref{opt: max_min_avg} and the dual average problem~\eqref{opt: min_max_avg} each attain their optimum at $z^*, (\bar \lambda_j^* , j \in [m])$. 

%\blue{Establish Kelly decomposition for average problem by refering to Kelly.}

For player $i$, consider the price $\bar \rho_i^* := \sum_{j \in J_i} \bar \lambda_j^*$, which is obtained by summing the prices $\bar \lambda_j^*$ corresponding to the links on player $i$'s route. 
From the dual average problem~\eqref{opt: min_max_avg}, fixing $\bar \lambda_j = \bar \lambda_j^* \; \forall j$, we get that the optimum lottery allocation $z_i^*$ for player $i$ should optimize the problem $\USERA[z_i;\bar \rho^*_i, h_i,v_i]$.
%for $\bar \rho_i = \bar \rho_i^*$, 

We now impose some additional conditions on the probability weighting function that are typically assumed based on empirical evidence and certain psychological arguments \cite{kahneman1979prospect}.
We assume that the probability weighting function $w_i(p_i)$ is concave for small values of the probability $p_i$ and convex for the rest. 
Formally, there exists a probability $\tilde p_i \in [0,1]$ such that $w_i(p_i)$ is concave over the interval $p_i \in [0,\tilde p_i]$ and convex over the interval $[\tilde p_i,1]$. 
Typically the point of inflection, $\tilde p_i$, is around $1/3$.
%Note that we still do not assume that the function $w_i(\cdot)$ is continuous. 
%It is possible to have a function that satisfies the above concavity-convexity conditions and in addition to it satisfy 
%\[
%	0 = w_i(0) < \lim_{p_i \downarrow 0} w_i(p_i) \leq \lim_{p_i \uparrow 1} w_i(p_i) < w_i(1) = 1.
%\]
%The discontinuity at $0$ and $1$ typically captures the strong attitude of people towards certainty. 
%Since we are dealing only with distributions up to a ``granularity'' of $1/k$, the continuity assumptions do not affect much. 
%However, these assumptions become important if we let $k$ tend to infinity. 

Let $w_i^* : [0,1] \to [0,1]$ be the minimum concave function that dominates $w_i(\cdot)$, i.e., $w_i^*(p_i) \geq w_i(p_i)$ for all $p_i \in [0,1]$. 
Let $p_i^* \in [0,1]$ be the smallest probability such that $w_i^*(p_i)$ is linear over the interval $[p_i^*,1]$. 
%We now state a lemma whose proof is straightforward and hence omitted.

\begin{lemma}
\label{lem: w_concave_hull}
	Given the assumptions on $w_i(\cdot)$, we have $p_i^* \leq \tilde p_i$ and $w_i^*(p_i) = w_i(p_i)$ for $p_i \in [0,p_i^*]$. 
	If $p_i^* < 1$, then for any $p_i^1 \in [p_i^*,1)$, we have 
	\begin{equation}
		\label{eq: w_shape_prop}
		w_i(p_i) \leq w_i(p_i^1) + (p_i-p_i^1)\frac{1 - w_i(p_i^1)}{1 - p_i^1}.
	\end{equation}
	for all $p_i \in [p_i^1,1]$.
\end{lemma}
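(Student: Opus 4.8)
The plan is to reduce all three assertions to the behaviour of the secant-slope function
\[
 g(p) := \frac{1 - w_i(p)}{1-p}, \qquad p \in [0,1),
\]
the slope of the chord joining $(p, w_i(p))$ to the fixed point $(1, w_i(1)) = (1,1)$. First I would record the geometry of the least concave majorant: $w_i^*(0) = 0$, $w_i^*(1) = 1$, and the linear stretch of $w_i^*$ abutting $1$ is the chord through $(1,1)$ of the largest slope that still lies above $w_i$. Since a line of slope $s$ through $(1,1)$ dominates $w_i$ iff $s \le g(p)$ for all $p$, this steepest slope is $g_{\min} := \min_{p \in [0,1)} g(p)$ and its contact point is a minimizer of $g$; I therefore identify the statement's $p_i^*$ with a global minimizer of $g$. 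All three claims then become assertions about where $g$ is minimized and where it is monotone.

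Next I would analyse the sign of $g'$ through $\phi(p) := (1 - w_i(p)) - (1-p)w_i'(p)$, for which $g'(p) = \phi(p)/(1-p)^2$ and $\phi'(p) = -(1-p)w_i''(p)$ (I assume $w_i \in C^1$ on $(0,1)$ as elsewhere in the paper; one-sided derivatives would suffice). On the convex range $[\tilde p_i, 1)$ the supporting-line inequality $w_i(1) \ge w_i(p) + (1-p)w_i'(p)$ gives $\phi \ge 0$ directly, so $g$ is non-decreasing there, with no need to examine $w_i'$ near $1$. On the concave range $[0, \tilde p_i]$, $w_i'' \le 0$ makes $\phi$ non-decreasing; since $\phi(\tilde p_i) \ge 0$, it changes sign at most once, at a point $p_i^* \in [0,\tilde p_i]$ with $\phi \le 0$ before and $\phi \ge 0$ after. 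Hence $g$ is non-increasing on $[0,p_i^*]$ and non-decreasing on $[p_i^*, 1)$, so its global minimum is at $p_i^*$ and $p_i^* \le \tilde p_i$; this is Claim~1 and confirms the identification of $p_i^*$.

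For Claim~2 I would exhibit a concave majorant agreeing with $w_i$ on $[0,p_i^*]$: set $\tilde w := w_i$ on $[0,p_i^*]$ and let $\tilde w$ be the chord $\ell$ through $(p_i^*, w_i(p_i^*))$ and $(1,1)$, of slope $g_{\min}$, on $[p_i^*, 1]$. The tangency $w_i'(p_i^*) = g(p_i^*)$ (from $\phi(p_i^*) = 0$) makes $\tilde w$ continuously differentiable at the join and hence concave, while $\ell(p) - w_i(p) = (g(p) - g_{\min})(1-p) \ge 0$ shows $\tilde w \ge w_i$. Minimality of $w_i^*$ then forces $w_i \le w_i^* \le \tilde w = w_i$ on $[0,p_i^*]$, giving equality. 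For Claim~3, after clearing the positive factor $1 - p$ the desired inequality for $p_i^1 \in [p_i^*,1)$ and $p \in [p_i^1, 1)$ is exactly $g(p) \ge g(p_i^1)$, which holds because $g$ is non-decreasing on $[p_i^*,1)$, with the endpoint $p = 1$ supplying the equality case.

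The step I expect to be the main obstacle is matching the geometric definition of $p_i^*$ (the left end of the linear portion of the concave hull) with the analytic one (the minimizer of $g$, equivalently the sign change of $\phi$), together with the degenerate cases $p_i^* = 0$, $p_i^* = \tilde p_i$, and any flat pieces or derivative kinks of $w_i$. Once the monotonicity of $g$ on $[p_i^*,1)$ is established, all three claims follow with little further work.
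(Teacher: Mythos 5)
Your proposal is correct in outline and rests on the same central device as the paper's proof: the secant function $g(p) = (1-w_i(p))/(1-p)$, the identification of $p_i^*$ with its smallest minimizer, and the explicit construction of the least concave majorant as ``$w_i$ up to $p_i^*$, then the chord of slope $g(p_i^*)$ to $(1,1)$'' --- this is exactly the paper's $\hat w_i$, and inequality \eqref{eq: w_shape_prop} is read off in both arguments as monotonicity of $g$ on $[p_i^*,1)$. The one substantive divergence is how that monotone structure is established. You differentiate, writing $g' = \phi/(1-p)^2$ with $\phi(p) = (1-w_i(p)) - (1-p)w_i'(p)$ and $\phi' = -(1-p)w_i''(p)$; this needs $w_i$ to be (at least one-sidedly) differentiable, whereas the paper assumes only that $w_i$ is continuous, strictly increasing, and concave--convex, and therefore avoids derivatives entirely: it gets $g$ nondecreasing on $[\tilde p_i,1)$ straight from convexity of $w_i$ there, and proves $g$ nondecreasing on $[p_i^*,\tilde p_i]$ by contradiction --- if $g(p_i^1) > g(p_i^2)$ with $p_i^* \le p_i^1 < p_i^2 \le \tilde p_i$, continuity of $g$ and minimality of $g(p_i^*)$ yield a point $p_i < p_i^1$ with $g(p_i) = g(p_i^2)$, which places $(p_i^1, w_i(p_i^1))$ strictly below the chord joining $(p_i, w_i(p_i))$ and $(p_i^2, w_i(p_i^2))$, contradicting concavity. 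Your route can be repaired with one-sided derivatives (which concave and convex functions always possess), as you anticipate, but note that the ``$C^1$ tangency at the join'' step of your Claim~2 should not rely on $\phi(p_i^*) = 0$: in the degenerate cases $p_i^* = 0$ or $p_i^* = \tilde p_i$ that you flag, $\phi$ need not vanish there. The correct statement is that since $g(p) \ge g(p_i^*)$ for $p \le p_i^*$, the chord dominates $w_i$ on $[0,p_i^*]$ and touches it at $p_i^*$, so the left derivative of $w_i$ at $p_i^*$ is at least the chord slope, which is what concavity of the glued function actually requires. With that adjustment your argument is sound and delivers all three conclusions; the paper's version buys the weaker regularity hypotheses at the cost of a slightly less transparent contradiction argument, while yours makes the ``$g$ decreases then increases'' picture explicit.
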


A proof of this lemma is included in Appendix \ref{sec: appendix}.
We now show that, under certain conditions, the optimal lottery allocation $z_i^*$ satisfies
\begin{equation}
\label{eq: opt_lottery_equal}
	z_i^*(l^*) = z_i^*(l^* + 1) = \dots = z_i^*(k),
\end{equation}
where $l^* := \min \{l \in [k] : (l-1)/k \geq p_i^*\}$, provided $p_i^* \leq (k-1)/k$. 
As a result, for a typical optimum lottery allocation, the lowest allocation occurs with a large probability approximately equal to $1-p_i^*$, and with a few higher allocations that we recognize as bonuses. 

\begin{proposition}
	\label{lem: opt_lottery}
	For any average user problem $\USERA[z_i;\bar \rho^*_i,h_i,v_i]$ with a strictly increasing, continuous,  differentiable and strictly concave value function $v_i(\cdot)$, and a strictly increasing continuous probability weighting function $w_i(\cdot)$ (satisfying $w_i(0) = 0$ and $w_i(1) = 1$) such that $p_i^* \leq (k-1)/k$, the optimum lottery allocation $z_i^*$ satisfies Equation \eqref{eq: opt_lottery_equal}.
\end{proposition}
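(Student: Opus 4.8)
The plan is to compare $\USERA[z_i;\bar\rho_i,h_i,v_i]$ against a surrogate problem obtained by replacing the weighting function $w_i$ with a function $\hat w_i$ that coincides with $w_i$ on the ``head'' of the probability grid and is \emph{linear} on the ``tail''. I will solve the surrogate exactly, read off the equal-tail structure from the linear part, and then argue that the surrogate optimizer is in fact the (unique) optimizer of the original user problem. Throughout I use the equivalent increment form of the CPT value, in which the cumulative weight $w_i(l/k)$ multiplies the nonnegative increment $v_i(z_i(l))-v_i(z_i(l+1))$.

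First I would anchor the linearization at the grid point $p^1:=(l^*-1)/k$. Since $p_i^*\le (k-1)/k$, the index $l=k$ satisfies $(l-1)/k\ge p_i^*$, so $l^*\le k$ and hence $p_i^*\le p^1<1$; thus Lemma~\ref{lem: w_concave_hull} applies with $p_i^1=p^1$. Writing $s:=(1-w_i(p^1))/(1-p^1)>0$ and $\ell(p):=w_i(p^1)+(p-p^1)s$, inequality \eqref{eq: w_shape_prop} gives $w_i(p)\le \ell(p)$ for $p\in[p^1,1]$. Define $\hat w_i(p):=w_i(p)$ for $p\le p^1$ and $\hat w_i(p):=\ell(p)$ for $p\ge p^1$; these agree at $p^1$ and at $1$ (where $\ell(1)=1$). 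Then $\hat w_i\ge w_i$ on $[0,1]$, and the induced weights $\hat h_i(l):=\hat w_i(l/k)-\hat w_i((l-1)/k)$ satisfy $\hat h_i(l)=h_i(l)$ for $l\le l^*-1$ and $\hat h_i(l)=s/k$ (a positive constant) for every $l\ge l^*$. Comparing the increment forms of $V_i(z_i)=\sum_l h_i(l)v_i(z_i(l))$ and $\hat V_i(z_i)=\sum_l \hat h_i(l)v_i(z_i(l))$, the bound $\hat w_i\ge w_i$ together with $v_i(z_i(l))\ge v_i(z_i(l+1))$ on the feasible set yields $\hat V_i(z_i)\ge V_i(z_i)$ pointwise.

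Next I would solve the surrogate problem $\max\,\hat V_i(z_i)-\tfrac{\bar\rho_i}{k}\sum_l z_i(l)$ over $z_i(1)\ge\dots\ge z_i(k)\ge 0$ (taking $\bar\rho_i>0$, so the problem is bounded). At a maximizer $\hat z_i$, the tail must maximize the separable contribution $\sum_{l=l^*}^{k} g(z_i(l))$, where $g(t):=\tfrac{s}{k}v_i(t)-\tfrac{\bar\rho_i}{k}t$, subject only to the chain $z_i(l^*-1)\ge z_i(l^*)\ge\dots\ge z_i(k)\ge 0$ with $z_i(l^*-1)$ held fixed. As $g$ is strictly concave with a single unconstrained maximizer $t^\star$, the separable maximum over the box $[0,z_i(l^*-1)]$ is attained at the constant vector $z_i(l)\equiv \min\{t^\star,z_i(l^*-1)\}$ for $l\ge l^*$, which automatically respects the ordering. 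Hence $\hat z_i(l^*)=\dots=\hat z_i(k)$, i.e. $\hat z_i$ satisfies \eqref{eq: opt_lottery_equal}.

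Finally I would transfer optimality back. The key identity is $\hat V_i(\hat z_i)=V_i(\hat z_i)$: in $\hat V_i(\hat z_i)-V_i(\hat z_i)=\sum_l [\hat w_i(l/k)-w_i(l/k)][v_i(\hat z_i(l))-v_i(\hat z_i(l+1))]$ every term vanishes --- for $l\le l^*-1$ because $\hat w_i=w_i$ on $[0,p^1]$, for $l^*\le l\le k-1$ because the increment is zero by the equal-tail property, and for $l=k$ because $\hat w_i(1)=w_i(1)=1$. Combined with $\hat V_i\ge V_i$, this forces the original objective at $\hat z_i$ to equal the surrogate optimum, which is at least the original optimum; since $\hat z_i$ is feasible, $\hat z_i$ is optimal for $\USERA[z_i;\bar\rho_i,h_i,v_i]$. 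Strict concavity of $v_i$ together with $h_i(l)>0$ makes the original objective strictly concave, hence its maximizer is unique, so $z_i^*=\hat z_i$. The main obstacle is precisely the choice of anchor: linearizing at the grid point $(l^*-1)/k$ rather than at $p_i^*$ itself is what makes $\hat w_i$ \emph{coincide} with $w_i$ at every head grid point (including $l=l^*-1$), and this exact coincidence --- not mere pointwise domination --- is what renders the surrogate objective tight at $\hat z_i$ and lets optimality be pushed back to the true problem.
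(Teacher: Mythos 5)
Your argument is correct in its essentials but follows a genuinely different route from the paper's. The paper works locally at the optimizer: it writes the KKT conditions for $\USERA[z_i;\bar\rho^*_i,h_i,v_i]$, supposes the constant tail begins only at some index $l^1$, sums the stationarity equations over $l^1\le l\le k$, and uses complementary slackness ($\alpha_i^*(l^1-1)=0$) together with strict concavity of $v_i$ to derive $h_i(l^1-1)>\frac{1}{k-l^1+1}\sum_{s=l^1}^k h_i(s)$; Lemma~\ref{lem: w_concave_hull} then forces $(l^1-2)/k<p_i^*$, i.e.\ $l^1\le l^*$. Your proof is instead a global majorization argument: you replace $w_i$ by the function agreeing with $w_i$ up to the grid point $(l^*-1)/k$ and equal to the chord to $(1,1)$ beyond it, observe (again via Lemma~\ref{lem: w_concave_hull}) that the surrogate CPT value dominates the true one on the ordered cone, solve the surrogate exactly (its tail is a separable strictly concave problem with identical coordinate objectives, hence has a constant optimizer), and transfer optimality back through the tightness identity $\hat V_i(\hat z_i)=V_i(\hat z_i)$, which holds precisely because the two weighting functions agree at every grid point where the increment of $v_i$ along $\hat z_i$ is nonzero. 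Anchoring at $(l^*-1)/k$ rather than at $p_i^*$ is exactly the right move; your route has the side benefits of not using differentiability of $v_i$ and of exhibiting $z_i^*$ as the solution of an explicitly solvable problem.

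The one step you should not wave away is the existence of a maximizer for the surrogate problem. The justification ``$\bar\rho_i>0$, so the problem is bounded'' is not valid: the feasible cone is unbounded, and if $v_i'(t)\to c>0$ as $t\to\infty$ the surrogate objective grows like $\bigl(\hat w_i(l/k)\,c-\tfrac{l}{k}\bar\rho_i\bigr)t$ along the extreme ray $(1,\dots,1,0,\dots,0)$ with $l$ ones, which can tend to $+\infty$ even with $\bar\rho_i>0$; and since $\hat w_i\ge w_i$, boundedness of the original problem does not transfer for free. Two repairs are available: (a) note that $\hat w_i(l/k)$ is a convex combination of $w_i((l^*-1)/k)$ and $w_i(1)$ with the same weights that express $l/k$ as a combination of $(l^*-1)/k$ and $1$, so the surrogate's recession rate on each extreme ray is a convex combination of the original's at rays $l^*-1$ and $k$, hence nonpositive whenever the original attains its maximum; or, more simply, (b) run the whole argument on the compact truncation $\{z: C\ge z_i(1)\ge\dots\ge z_i(k)\ge0\}$ with $C$ large enough to contain $z_i^*$ --- the constant-tail and tightness steps go through verbatim, giving $V_i(\hat z_i)-\tfrac{\bar\rho^*_i}{k}\sum_l\hat z_i(l)\ge\max_{[0,C]}\bigl(V_i-\tfrac{\bar\rho^*_i}{k}\sum_l z_i(l)\bigr)$, after which uniqueness from strict concavity finishes the proof. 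With that patch the argument is complete.
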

\begin{proof}
	The Lagrangian for the average user problem $\USERA[z_i;\bar \rho^*_i,h_i,v_i]$ is
	\begin{align*}
		\pcal{L}(z_i;\alpha_i) &= \sum_{l=1}^k h_i(l) v_i(z_i(l)) - \frac{\bar \rho^*_i}{k} \sum_{l = 1}^k z_i(l) + \sum_{l=1}^k \alpha_i(l)[z_i(l) - z_i(l+1)],
	\end{align*}
    where $\alpha_i(l) \geq 0$ are the dual variables corresponding to the order constraints $z_i(l) \geq z_i(l+1)$, and $\alpha_i(0) = 0$.
	Differentiating with respect to $z_i(l)$, we get,
	\begin{align*}
		\frac{\partial \pcal{L}(z_i;\alpha_i)}{\partial z_i(l)} &= h_i(l) v_i'(z_i(l)) - \frac{\bar \rho^*_i}{k} +  \alpha_i(l) - \alpha_i(l-1).
	\end{align*}
	Since the problem $\USERA[z_i;\bar \rho^*_i,h_i,v_i]$ has a concave objective function and linear constraints, there exist $\alpha_i^*(l) \geq 0$ such that
		\begin{align}
		\label{eq: compl_avg_user_1}
			h_i(l) v_i'(z_i^*(l)) &=  \frac{\bar \rho^*_i}{k} - \alpha_i^*(l) + \alpha_i^*(l-1), \forall l \in [k],
		\end{align}
	and
		\begin{align}
		\label{eq: compl_avg_user_2}
			\alpha_i^*(l)[z_i^*(l) - z_i^*(l+1)] &= 0, \forall l \in [k].
		\end{align}
	If $z_i^*$ consists of identical allocations then it trivially satisfies Equation \eqref{eq: opt_lottery_equal}.
	If not, then there exists $l^1 \in \{2,\dots,k\}$ such that
	\[
		z_i^*(l^1 - 1) > z_i^*(l^1) = z_i^*(l^1 + 1) = \dots = z_i^*(k),
	\]
	i.e. $z_i^*(l^1)$ is the lowest allocation and occurs with probability $(k-l^1+1)/k$, and the next lowest allocation is equal to $z_i^*(l^1-1)$.
	Summing the equations corresponding to $l^1 \leq l \leq k$ from \eqref{eq: compl_avg_user_1}, we get
	\begin{align}
	\label{eq: l1_sum_1}
		\l[\sum_{s = l^1}^k h_i(s) \r] v_i'(z_i^*(l^1)) = \l(\frac{k-l^1+1}{k}\r) \bar \rho^*_i - \alpha_i^*(k) + \alpha_i^*(l^1-1).
	\end{align}
	The equation corresponding to $l = l^1 - 1$ in \eqref{eq: compl_avg_user_1} says,
	\begin{align}
	 	\label{eq: l1_sum_2}
	 	h_i(l^1-1)v_i'(z_i^*(l^1-1)) = \frac{\bar \rho^*_i}{k} - \alpha_i^*(l^1 - 1) + \alpha_i^*(l^1 - 2).
	 \end{align} 
	 Since $z_i^*(l^1 - 1) > z_i^*(l^1)$, from \eqref{eq: compl_avg_user_2}, we have $\alpha_i^*(l^1 -1) = 0$.
	 Thus from \eqref{eq: l1_sum_1} and \eqref{eq: l1_sum_2}, we have
	 \begin{align*}
	 	h_i(l^1-1) v_i'(z^*_i(l^1-1)) \geq \frac{\bar \rho^*_i}{k} \geq \frac{1}{k - l^1 +1} \l[\sum_{s = l^1}^k h_i(s)\r] v_i'(z^*_i(l^1)).
	 \end{align*}
	 Further, since $v_i(\cdot)$ is strictly concave and strictly increasing, $z_i^*(l^1-1) > z_i^*(l^1)$ implies $0 < v_i'(z_i^*(l^1-1) < v_i'(z_i^*(l^1))$. 
	 Thus,
	 \begin{equation}
	 \label{eq: contradiction_1}
	 	h_i(l^1 - 1) > \frac{1}{k - l^1 +1} \l[\sum_{s = l^1}^k h_i(s)\r].
	 \end{equation}
	 If $(l^1-2)/k \geq p_i^*$, 
	 then
	 \begin{align}
	 \label{eq: dec_weight_ineq}
	 	h_i(l^1-1) &= w_i\l(\frac{l^1-1}{k}\r) - w_i\l(\frac{l^1-2}{k}\r) \nonumber\\ 
	 	&\leq \frac{1}{k - l^1 +1} \l[ w_i(1) - w_i\l(\frac{l^1-1}{k}\r) \r] \nonumber\\ 
	 	&= \frac{1}{k - l^1 +1} \l[\sum_{s = l^1}^k h_i(s)\r].
	 \end{align}
	 where the inequality follows from \eqref{eq: w_shape_prop} with $p_i^1 = (l^1 -2)/k$ and $p_i = (l^1 - 1)/k$.
	 % suppose $(l^1-2) \geq \tilde p_i$. 
	 % Then \eqref{eq: dec_weight_ineq} follows from the convexity of $w_i(p)$ over the interval $[\tilde p, 1]$. 
	 % If $ p_i^* \leq (l^1-2) \leq \tilde p_i$, then
	 % \[
	 % 	\frac{1}{k - l^1 +1} \l[ w_i(1) - w_i\l(\frac{l^1-1}{k}\r) \r]
	 % \]
	 % is at least as big as the sub-gradient of $w_i(p_i)$ at $p = p_i^*$, and
	 % \[
	 %  	w_i\l(\frac{l^1-1}{k}\r) - w_i\l(\frac{l^1-2}{k}\r)
	 %  \] 
	 %  is at most as big as the smallest sub-gradient of $w_i(p_i)$ at $p = p_i^*$, by concavity of $w_i(p)$ over the interval $[0,\tilde p_i]$.
	 %  This establishes \eqref{eq: dec_weight_ineq} and leads to a contradiction. 
	However, \eqref{eq: dec_weight_ineq} contradicts \eqref{eq: contradiction_1} and hence $(l^1 - 2)/k < p_i^*$. 
	This proves the lemma.
\end{proof}

%!TEX root = ../main.tex

\section{Conclusions and Future Work}
\label{sec: conclusion}

We saw that if we take the probabilistic sensitivity of players into account, 
then lottery allocation improves the ex ante aggregate utility of the players. 
We considered the RDU model, a special case of CPT utility, to model probabilistic sensitivity. 
This model, however, is restricted to reward allocations, and it would be interesting to extend it to a general CPT model with reference point and loss aversion. 
This will allow us to study loss allocations as in punishment or burden allocations, for example criminal justice, military drafting, etc.

For any fixed permutation profile, we showed the existence of equilibrium prices in a market-based mechanism to implement an optimal lottery. 
We also saw that finding the optimal permutation profile is an  NP-hard problem. 
We note that the system problem has parallels in cross-layer optimization in wireless \cite{lin2006tutorial} and multi-route networks \cite{wang2005cross}. 
Several heuristic methods have helped achieve approximately optimal solutions in cross-layer optimization. 
Similar methods need to be developed for our system problem. 
We leave this for future work.

The hardness in the system problem comes from hard link constraints. 
Hence, by relaxing these conditions to hold only in expectation, we derived some qualitative features of the optimal lottery structure
under the typical assumptions on the probability weighting function of each agent in the RDU model. 
As observed, the players typically ensure their minimum allocation with high probability, and gamble for higher rewards with low probability.

\appendix
%!TEX root = ../main.tex

\section{Proofs}
\label{sec: appendix}

%%%%%%%%%%%%%%%%%%%%%%%%%%%%%%%%%%%%%%%%%%%%%%%%%%%%%%%%%%%%%%%%%%%%%%%%%%%%%%%%%%%%

\begin{proof}[Proof of Lemma \ref{lem: opt_pi_strong_dual}]
	Suppose problem \eqref{opt: max_min} and its dual \eqref{opt: min_max} have the same value.
	The value of \eqref{opt: max_min} is same as that of the system problem $\SYS[z,\pi; h,v,A,c]$. 
	Let us denote the objective function by
	\begin{align*}
		\Theta(\pi, z , \lambda) := 
 &\sum_{i=1}^n \sum_{l=1}^k h_i(l)v_i(z_i(l))\\
& + \sum_{j=1}^m \sum_{l=1}^k \lambda_j(l)\l[c_j - \sum_{i \in R_j} z_i(\pi_i(l))\r].
	\end{align*}
	Since $\pcal {F}$ is a polytope, for any fixed permutation profile $\pi$, the set of feasible $z$ is closed and bounded.
	The function $\Theta(\pi,z,\lambda)$ is continuous in $z$ and hence the fixed-permutation system problem $\SYSRES[z;\pi, h,v,A,c]$ has a bounded value.
	We also note that this value is non-negative.
	Since there are finitely many permutation profiles $\pi \in \prod_{i} S_k$, maximizing over these, we get that the system problem has a bounded non-negative value,
	say $W$, achieved say at $z^*$ and $\pi^*$.
	Thus
	\begin{equation}
	\label{eq: opt_value_W}
		\sum_{i=1}^n \sum_{l=1}^k h_i(l)v_i(z^*_i(l)) = W,
	\end{equation}
	and the lottery $z^*$ is feasible with respect to the permutation profile $\pi^*$, i.e. $z_i^*(l) \geq z_i^*(l+1)$ for all $i \in [n], l \in [k]$ and 
	\begin{equation}\label{eq: duality_feasibility}
		\sum_{i \in R_j} z^*_i(\pi^*_i(l)) \leq c_j \text{ for all } j \in [m], l \in [k].
	\end{equation}
	If this were not true, then the minimum of the objective function $\Theta(\pi^*,z^*,\lambda)$ with respect to $\lambda$ would be $-\infty$ and not $W \geq 0$.

%	Since $v_i(\cdot)$ is strictly increasing for all $i$ and $h_i(l) > 0$ for all $i,l$, we have 
	%The value of the primal problem \eqref{opt: max_min} $W_{ps}$ is equal to $W$, and the objective function takes this value for $z = z^*, \pi = \pi^*$ and any $\lambda \geq 0$.
	The value of the dual problem \eqref{opt: min_max} is equal to $W$. 
	Consider the function $\Theta_d : \bbR_+^{m \times k} \to \bbR$, given by maximizing over the objective function in problem \eqref{opt: min_max}, with respect to $\pi$ and $z$ for a fixed $\lambda \geq 0$,
	\begin{align*}
		\Theta_d(\lambda) := \max_{\substack{
	\pi_i \in S_k \forall i,\\
	z : z_i(l) \geq z_i(l+1) \forall i,l
}
}
 \Theta(\pi,z,\lambda).
	\end{align*}
	We note that the function $\Theta_d(\lambda)$ is lower semi-continuous, since the function $\Theta(\pi,z,\lambda)$ is continuous in $\lambda$. 
	Since $\l(v_i(\cdot), \forall i\r)$ are concave strictly increasing functions, there exists a sufficiently large finite $\lambda$ such that $0 \leq M := \Theta_d(\lambda) < \infty$.
	It follows that the minimum of $\Theta_d(\lambda)$ is achieved over the domain defined by $\lambda_j(l) \in [0,M/(\min_{j} c_j)]$ for all $j,l$.
	Since this is a bounded region and the function $\Theta_d(\lambda)$ is lower semi-continuous, there exists a $\lambda^*$ such that $\Theta_d(\lambda^*) = \min_{\lambda \geq 0} \Theta_d(\lambda) = W$.

	Since $\Theta_d(\lambda^*) = W$, we have $\Theta(\pi^*,z^*,\lambda^*) \leq W$.
	However, from \eqref{eq: opt_value_W}, \eqref{eq: duality_feasibility} and the fact that $\lambda_j^*(l) \geq 0$ for all $j,l$ we get $\Theta(\pi^*,z^*,\lambda^*) \geq W$.
	Hence $\Theta(\pi^*,z^*,\lambda^*) = W$.
	Thus the maximum in the definition of $\Theta_d(\lambda^*)$ is achieved at $z^*,\pi^*$. 
	This implies $\pi^*_i$ satisfies \eqref{eq: pi_opt_order} for all $i$.
\end{proof}

\begin{proof}[Proof of Lemma \ref{lem: envelope_diff}]
Let $\bar z_i \geq 0$ and $\tau > 0$. 
Let $z_i^* \in Z_i(\bar z_i)$ be such that $V_i^\text{avg}(\bar z_i) = \sum_{l=1}^k h_i(l) v_i(z_i^*(l))$. 
We have, $(z_i^*(l) + \tau)_{l \in [k]} \in Z_i(\bar z_i + \tau)$ and
\begin{align*}
	V_i^\text{avg}(\bar z_i + \tau) \geq \sum_{l =1}^k h_i(l) v_i(z_i^*(l) + \tau) > \sum_{l =1}^k h_i(l) v_i(z_i^*(l)) = V_i^\text{avg}(\bar z_i),
\end{align*}
where the strict inequality follows from the fact that $v_i(\cdot)$ is strictly increasing. 
This establishes that the function $V_i^\text{avg}(\bar z_i)$ is strictly increasing.

Let $\bar z_i^1, \bar z_i^2 \geq 0$ and $\sigma \in [0,1]$. 
Let $z_i^1 \in Z_i(\bar z_i^1)$ and $z_i^2 \in Z_i(\bar z_i^2)$ be such that $V_i^\text{avg}(\bar z_i^1) = \sum_{l=1}^k h_i(l) v_i(z_i^1(l))$ and $V_i^\text{avg}(\bar z_i^2) = \sum_{l=1}^k h_i(l) v_i(z_i^2(l))$.
Let $z_i^\sigma := \sigma z_i^1(l) + (1-\sigma) z_i^2(l)$ and $\bar z_i^\sigma := \sigma \bar z_i^1(l) + (1-\sigma) \bar z_i^2(l)$. Then $z_i^\sigma \in Z_i(\bar z_i^\sigma)$ and by the concavity of $v_i(\cdot)$, we have
\begin{align*}
	V_i^\text{avg}(\bar z_i^\sigma) \geq \sum_{l=1}^k h_i(l) v_i(z_i^\sigma(l)) &\geq \sum_{l=1}^k h_i(l) \l[\sigma v_i(z_i^1(l)) + (1-\sigma) v_i(z_i^2(l))\r]\\
	&= \sigma V_i^\text{avg}(\bar z_i^1) + (1-\sigma) V_i^\text{avg}(\bar z_i^2).
\end{align*}
	This establishes that the function $V_i^\text{avg}(\bar z_i)$ is concave.
	This implies that $V_i^\text{avg}(\bar z_i)$ is continuous and directionally differentiable at each $\bar z_i > 0$, 
	and we have the following relation between its left and right directional derivatives (see, for example, \cite{rockafellar2015convex}):
	\begin{equation}
	\label{eq: direct_diff_ineq}
		\frac{d}{d\bar z_i} V_i^\text{avg}(\bar z_i - ) \geq \frac{d}{d\bar z_i} V_i^\text{avg}(\bar z_i + ), \text{ for all } \bar z_i > 0.
	\end{equation}
	Further, if $(\bar z_i^t)_{t \geq 1}$ is a sequence such that $\bar z_i^t \to 0$, 
	and $z_i^t \in Z_i(\bar z_i^t)$ for all $t \geq 1$, then $z_i^t(l) \to 0$ for all $l \in [k]$. 
	By the continuity of the function $v_i(\cdot)$, we have that the function $V_i^\text{avg}(\bar z_i)$ is continuous at $\bar z_i = 0$.

	Let $\bar z_i > 0$, and let $\tau^t := 1/t$, for $t \geq 1$. 
	As before, let $z_i^* \in Z_i(\bar z_i)$ be such that $V_i^\text{avg}(\bar z_i) = \sum_{l=1}^k h_i(l) v_i(z_i^*(l))$. 
	Since $\bar z_i > 0$ and $(1/k) \sum_{l=1}^k z_i^*(l) = \bar z_i$, we have $z_i^*(l) > z_i^*(l+1)$ for at least one $l \in [k]$. 
	Let $\hat l \in [k]$ be the smallest such $l$.
	For $t \geq 1$, let $z_i^{t+}$ be given by
	\[
		z_i^{t+}(l) := \begin{cases}
			z_i^*(l) + \frac{k}{\hat l} \tau^t, &\text{ for } 1 \leq l \leq \hat l,\\
			z_i^*(l), &\text{ for } \hat l < l \leq k.
		\end{cases}
	\]
	Note that $z_i^{t+} \in Z_i(\bar z_i + \tau^t)$. 
	We have,
	\begin{align*}
		&V_i^\text{avg}(\bar z_i + \tau^t) - V_i^\text{avg}(\bar z_i) - \frac{k \tau^t}{\hat l}\sum_{l = 1}^{\hat l} h_i(l) v_i'(z_i^*(l)) \\
		& \geq \sum_{l=1}^k h_i(l) v_i(z_i^{t+}(l)) - \sum_{l=1}^k h_i(l) v_i(z_i^*(l)) - \frac{k \tau^t}{\hat l}\sum_{l = 1}^{\hat l} h_i(l) v_i'(z_i^*(l)) \\
		& = \sum_{l = 1}^{\hat l} h_i(l) \l[v_i(z_i^*(l) + {k \tau^t}/{\hat l}) - v_i(z_i^*(l)) - \frac{k \tau^t}{\hat l} v_i'(z_i^*(l))\r].
	\end{align*}
	Let $\gamma^* := \frac{k}{\hat l}\sum_{l = 1}^{\hat l} h_i(l) v_i'(z_i^*(l))$. 
	We have,
	\begin{equation}
	\label{eq: direct_diff_right}
		\frac{d}{d\bar z_i} V_i^\text{avg}(\bar z_i + ) \geq \liminf_{t \to \infty} \frac{V_i^\text{avg}(\bar z_i + \tau^t) - V_i^\text{avg}(\bar z_i)}{\tau^t} \geq \gamma^*.
	\end{equation}
	Similarly, for $t \geq \lceil k/z_i^*(\hat l) \rceil$ (here, $\lceil \cdot \rceil$ denotes the ceiling function), let $z_i^{t-}$ be given by
	\[
		z_i^{t-}(l) := \begin{cases}
			z_i^*(l) - \frac{k}{\hat l} \tau^t, &\text{ for } 1 \leq l \leq \hat l,\\
			z_i^*(l), &\text{ for } \hat l < l \leq k.
		\end{cases}
	\]
	We observe that $z_i^{t-} \in Z_i(\bar z_i - \tau^t)$, for all $t \geq \lceil k/z_i^*(\hat l) \rceil$, and 
	we have
	\begin{align*}
		&V_i^\text{avg}(\bar z_i) - V_i^\text{avg}(\bar z_i - \tau^t) - \frac{k \tau^t}{\hat l}\sum_{l = 1}^{\hat l} h_i(l) v_i'(z_i^*(l)) \\
		& \leq \sum_{l=1}^k h_i(l) v_i(z_i^*(l)) - \sum_{l=1}^k h_i(l) v_i(z_i^{t-}(l)) - \frac{k \tau^t}{\hat l}\sum_{l = 1}^{\hat l} h_i(l) v_i'(z_i^*(l)) \\
		& = \sum_{l = 1}^{\hat l} h_i(l) \l[v_i(z_i^*(l)) - v_i(z_i^*(l) - {k \tau^t}/{\hat l}) - \frac{k \tau^t}{\hat l} v_i'(z_i^*(l))\r].
	\end{align*}
	This implies,
	\begin{equation}
	\label{eq: direct_diff_left}
		\frac{d}{d\bar z_i} V_i^\text{avg}(\bar z_i - ) \leq \limsup_{t \to \infty} \frac{V_i^\text{avg}(\bar z_i) - V_i^\text{avg}(\bar z_i - \tau^t)}{\tau^t} \leq \gamma^*.
	\end{equation}
	From \eqref{eq: direct_diff_ineq}, \eqref{eq: direct_diff_right} and \eqref{eq: direct_diff_left}, we have
	\begin{equation}
		\frac{d}{d\bar z_i} V_i^\text{avg}(\bar z_i - ) = \frac{d}{d\bar z_i} V_i^\text{avg}(\bar z_i + ) = \gamma ^*.
	\end{equation}
	This establishes that the function $V_i^\text{avg}(\bar z_i)$ is differentiable and completes the proof.
\end{proof}

%%%%%%%%%%%%%%%%%%%%%%%%%%%%%%%%%%%%%%%%%%%%%%%%%%%%%%%%%%%%%%%%%%%%%%%%%%%%%%%%%%%%

\begin{proof}[Proof of Lemma \ref{lem: w_concave_hull}]
Consider the function $\bar w_i : [0,1] \to [0,1]$, given by
\[
	\bar w_i(p_i) := \begin{cases}
		w_i^*(p_i) &\text{ for }  0\leq p_i < \tilde p_i,\\
		w_i^*(\tilde p_i) + (p_i - \tilde p_i) \frac{1 - w_i^*(\tilde p_i)}{1 - \tilde p_i} &\text{ for }  \tilde p_i \leq p_i  \leq 1.
	\end{cases} 
\]
Since $w_i^*$ is concave on $[0,1]$, one can verify that the function $\bar w_i$ is also concave on $[0,1]$.
Since $w_i^*$ dominates $w_i$, we have $w_i^*(\tilde p_i) \geq w_i(\tilde p_i)$.
Since the function $w_i(p_i)$ is convex on the interval $[\tilde p_i,1]$, 
we have $w_i(p_i) \leq \bar w_i(p_i)$,
for $p_i \in [\tilde p_i, 1]$.
However, since $w_i^*$ is the minimum concave function that dominates $w_i$, we get $\bar w_i = w_i^*$.
Thus, $w_i^*$ is linear over the interval $[\tilde p_i,1]$, and hence $p_i^* \leq \tilde p_i$.

Suppose $\tilde p_i = 1$. Then $w_i(\cdot)$ is a concave function on the unit interval $[0,1]$, and hence $w_i^*(p_i) = w_i(p_i)$, for $p_i \in [0,1] \supset [0,p_i^*]$.
Further, if $p_i^* < 1$, then $w_i(p_i)$ is linear over $[p_i^*,1]$,
% since it is both convex and concave over this interval, 
 and inequality \eqref{eq: w_shape_prop} holds, in fact, with equality. 
This completes the proof of Lemma \ref{lem: w_concave_hull}, if $\tilde p_i = 1$.

For the rest of the proof we assume $\tilde p_i < 1$.
Define $g_i: [0,1) \to \bbR_+$ as
\[
	g_i(p_i) := \frac{1 - w_i(p_i)}{1 - p_i}.
\]
We now provide an alternate characterization of the function $w_i^*$ and the point $p_i^*$.
Let $\hat p_i \in [0,1]$ be given by
\[
	\hat p_i := \min \argmin_{p_i \in [0,\tilde p_i]} \l\{g(p_i)\r\}.
\]
The existence of $\hat p_i$ is guaranteed by the continuity of the function $g_i(p_i)$ on the compact interval $[0,\tilde p_i]$.
%We will show that $\hat p_i = p_i^*$.
Let $\hat a_i := g_i(\hat p_i)$.
Since $w_i(p_i)$ is convex over the interval $[\tilde p_i, 1]$, the function $g_i(p_i)$ is nondecreasing over $[\tilde p_i,1)$.
Hence $g_i(p_i) \geq g_i(\hat p_i)$, for $p_i \in [0,1)$.
Substituting the expression for $g_i(\cdot)$ and rearranging, we get
\[
 	w_i(p_i) \leq w_i(\hat p_i) + \hat a_i (p_i - \hat p_i),
 \] 
 for $p_i \in [0,1]$.
 Since the function $w_i(p_i)$ is concave on the interval $[0,\hat p_i]$ and the linear function $w_i(\hat p_i) + \hat a_i (p_i - \hat p_i)$ dominates $w_i(p_i)$ on $[0,1]$, we have that the following function $\hat w_i(p_i)$ is concave on $[0,1]$:
 \[
 	\hat w_i(p_i) := \begin{cases}
		w_i(p_i) &\text{ for }  0 \leq p_i < \hat p_i,\\
		w_i(\hat p_i) + \hat a_i (p_i - \hat p_i) &\text{ for }  \hat p_i \leq p_i  \leq 1.
	\end{cases} 
 \]
 It follows that $w_i^*(p_i) = \hat w_i(p_i)$ for $p_i \in [0,1]$.
 Thus, $p_i^* \leq \hat p_i$ and $w_i^*(p_i) = \hat w_i(p_i) = w_i(p_i)$ for $p_i \in [0,p_i^*]$.
 If $\hat p_i = 0$, then $p_i^* = \hat p_i$.
 If $\hat p_i > 1$, then from the definition of $\hat p_i$, we have $g_i(p_i) > g_i(\hat p_i)$ for $p_i \in [0,\hat p_i)$, and this implies that $\hat p_i = p_i^*$.

We now prove inequality \eqref{eq: w_shape_prop}. 
Since $\tilde p_i < 1$ we have $p_i^* < 1$.
Rearranging we get that inequality \eqref{eq: w_shape_prop} is equivalent to showing that the function $g_i(p_i)$ is non-decreasing over the interval $[p_i^*,1)$.
As observed earlier, $g_i(p_i)$ is non-decreasing over the interval $[\tilde p_i,1)$. Hence it is enough to show that the function $g_i(p_i)$ is non-decreasing on $[p_i^*,\tilde p_i]$.
Suppose, on the contrary, there exist $p_i^1, p_i^2 \in [p_i^*,\tilde p_i]$ such that $p_i^1 < p_i^2$ and $g_i(p_i^1) > g_i(p_i^2)$. 
Since $p_i^* = \hat p_i$, and from the definition of $\hat p_i$, we have $p_i^1 > p_i^*$ and $g_i(p_i^*) \leq g_i(p_i^2)$. 
Since, $g_i(p_i)$ is a continuous function, there exist $p_i \in [p_i^*, p_i^1)$ such that $g_i(p_i) = g_i(p_i^2)$.
Thus we have $p_i < p_i^1 < p_i^2$ such that $g_i(p_i^1) > g_i(p_i) = g_i(p_i^2)$.
However, this contradicts the concavity of $w_i(p_i)$ on $[p_i^*,\tilde p_i]$.
This completes the proof.
% For any $p_i^1 \in [p_i^*,\tilde p_i]$, we have
% \[
% 	g_i(p_i^1) \geq a_i \geq a_L(\hat p_i) \geq a_L(p_i^1),
% \]
% where the last inequality follows from the concavity of the function $w_i(p_i)$ over the interval $[p_i^*,\tilde p_i]$.
% Using concavity of $w_i(p_i)$ over $[p_i^1,\tilde p_i]$, and the fact that $g_i(p_i^1) \geq a_L(p_i^1)$, we get $g_i(p_i^1) \leq g_i(p_i)$ for all $p_i \in [p_i^1,\tilde p_i]$.
% This establishes that $g_i(p_i)$ is non-decreasing over $[p_i^*,\tilde p_i]$.
\end{proof}

%%%%%%%%%%%%%%%%%%%%%%%%%%%%%%%%%%%%%%%%%%%%%%%%%%%%%%%%%%%%%%%%%%%%%%%%%%%%%%%%%%%%

%\input{tex/ex_calc_app}

%\clearpage
% Bibliography:
\bibliographystyle{abbrv} % Citation style
\bibliography{Bib_Database}

\end{document}